\newcommand{\subparagraph}{}
\def\BibTeX{{\rm B\kern-.05em{\sc i\kern-.025em b}\kern-.08em
		T\kern-.1667em\lower.7ex\hbox{E}\kern-.125emX}}
\titlespacing\section{0pt}{3pt}{1pt}%
\titlespacing\subsection{0pt}{2pt}{2pt}
\def\endthebibliography{%
	\def\@noitemerr{\@latex@warning{Empty `thebibliography' environment}}%
	\endlist
}
\newenvironment{pf}[1][\proofname]{\par
	\pushQED{\qed}%
	\normalfont \topsep0\p@\relax
	\trivlist
	\item[\hskip\labelsep\itshape
	#1\@addpunct{.}]\ignorespaces
}{%
	\popQED\endtrivlist\@endpefalse
}
\newtheorem{theorem}{Theorem}[section]
\newtheorem{proposition}[theorem]{Proposition}
\newtheorem{corollary}{Corollary}[theorem]
\newtheorem{lemma}[theorem]{Lemma}
\theoremstyle{definition}
\newtheorem{definition}[theorem]{Definition}
\newcommand\independent{\protect\mathpalette{\protect\independenT}{\perp}}
\def\independenT#1#2{\mathrel{\rlap{$#1#2$}\mkern2mu{#1#2}}}
\begin{document}
	
	\title{How wireless queues benefit from motion: \\an analysis of the continuum between\\ zero and infinite mobility\\
		\thanks{This work was supported in part by the National Science Foundation under Grant
			No. NSF-CCF-1514275 and an award from the Simons Foundation (\#197982), both
			to the University of Texas at Austin and by the H2020 ERC award \#788851 to INRIA. N. Ramesan (nithinseyon@utexas.edu) is with the University of Texas at Austin, USA. F. Baccelli (francois.baccelli@austin.utexas.edu) is with the University of Texas at Austin, USA and INRIA, France. Part of this work appeared in the proceedings of WiOPT 2020 (\cite{ramesan2020wireless}).}
	}
	
	\author{Nithin S. Ramesan and Fran\c{c}ois Baccelli}
	\maketitle
	\vspace{-1.2cm}
	\begin{abstract}
		This paper considers the time evolution of a queue that is embedded in a Poisson point process of moving wireless interferers. The queue is driven by an external arrival process and is subject to a time-varying service process that is a function of the SINR that it sees. Static configurations of interferers result in an infinite queue workload with positive probability. In contrast, a generic stability condition is established for the queue in the case where interferers possess any non-zero mobility that results in displacements that are both independent across interferers and oblivious to interferer positions. The proof leverages the mixing property of the Poisson point process. The effect of an increase in mobility on queueing metrics is also studied. Convex ordering tools are used to establish that faster moving interferers result in a queue workload that is \textcolor{black}{smaller} for the increasing-convex stochastic order. As a corollary, mean workload and mean delay \textcolor{black}{decrease} as network mobility increases. \textcolor{black}{This stochastic ordering as a function of mobility is explained by establishing positive correlations between SINR level-crossing events at different time points, and by determining the autocorrelation function for interference and observing that it decreases with increasing mobility.} System behaviour is empirically analyzed using discrete-event simulation and the performance of various mobility models is evaluated using heavy-traffic approximations.
	\end{abstract}
	
	\begin{IEEEkeywords}
		Time-varying queues, queues in random environments, mobility, interference correlation, ad hoc networks, stochastic geometry, network dynamics, stochastic ordering, convex ordering, heavy-traffic approximations
	\end{IEEEkeywords}

	\section{Introduction}
 	With the rising popularity of V2V, V2X and drone networks, and more generally an increasingly widespread paradigm of mobility, the analysis of the effects of motion on wireless networks is increasingly relevant. Recent proposals (for example, \cite{sui2014deployment}) also advocate for the introduction of ``moving networks" - base stations mounted on vehicles to serve vehicular users. Current wireless networks also inherently involve a certain degree of motion - users in cellular networks are very often mobile, and there exist multiple use cases for ad hoc networks that involve mobile nodes. The effect of mobility on \textcolor{black}{certain metrics of performance of wireless networks has been extensively investigated. This paper, however, takes a new viewpoint on performance - that of queueing delay, or latency - and studies the effect of mobility on it.\\
	To study this metric, we relax the full buffer assumption that is widely used in wireless network analysis (for example, in \cite{gupta2000capacity} and throughout \cite{baccelli2009stochastic}). This rather ubiquitous assumption assumes that wireless transmitters always have packets to transmit, ignoring issues pertaining to traffic dynamics. While this assumption may be an attractive one due to the simplification it allows for, in reality most buffers behave as queues. The full-buffer assumption abstracts out the aforementioned metric of queueing delay - sacrificing knowledge of packet-level effects for tractability at the network level.}\\
	\textcolor{black}{We study a model in which a single receiver-transmitter pair of interest has a queue associated with it - a queue whose service process is a function of the SINR it experiences, and hence of the positions of interfering nodes around it, which are modelled as a planar Poisson point process. We assume that these points are moving with unconstrained but \textit{finite} velocity. To the best of our knowledge, the present paper is the first work to study the effect of finite and unconstrained mobility on the performance of queues in a wireless setting.} \textcolor{black}{This model has multiple applications - we mention three here. The first is an ad-hoc network setting, where the receiver and its associated transmitter can be thought of as a pair of devices, one transferring some data/files to the other, as in a device-to-device (D2D) network or a  wireless LAN (WLAN) V2X network. Interferers are other transmitter nodes sharing the medium. The second application is an uplink cellular network setting, where the receiver of interest is a base station and its associated transmitter is a mobile user equipment (UE). Interferers are other UEs sharing the medium. The third is the aforementioned ``moving network" downlink setting, where the receiver of interest is a UE, the associated transmitter is the UE's associated base station and interferers are other moving base stations. The interference experienced by the UE is then the inter-cell interference.}\\ Our model studies the continuum of interferer velocities in the interval $[0, \infty)$ and scenarios where the point processes seen at two different time instants at finite distance are correlated. The understanding of this correlation and its effects is the main aim of the present paper. The correlations we see in interferer configurations will induce time-correlations in interference and hence the service of the queue - correlations that make analysis of the behaviour of the queue non-trivial.\\ \textcolor{black}{Our model is novel and challenging to analyze because it deals with two types of coupled network dynamics - (i) the motion of interferers changes their locations and hence triggers time evolution of the interference and service rate seen by the queue, and (ii) the queueing dynamics induce changes in the length of queue buffer and hence has implications on metrics like delay. These processes are coupled because the time evolution of the queue depends directly on the time evolution of the positions of interferers around it. In summary, the interplay of stochastic geometry and queueing is a challenging problem - the former deals with space averages, and the latter with time averages. We will show that mobility unifies these disjoint paradigms.}
	\subsection{Related Work}
	\textcolor{black}{This paper deals with the effect of mobility on a queue in a wireless network. To the best of our knowledge, while there has been previous work on the effect of mobility on wireless networks and on the evolution of queues in a wireless environment, this is the first work to consider the coupling of both mobility and queueing together. We first review these 2 areas since our work lies at their intersection, and then review other relevant literature.\\
	The effects of mobility on the performance of wireless networks have been studied extensively, beginning with the seminal work \cite{grossglauser2002mobility} (for a more thorough literature review, see \cite{ying2008optimal} \& \cite{bao2015stochastic} and references therein). The vast majority of works in this area (for example, \cite{lin2006degenerate, sharma2007delay, neely2005capacity, li2010throughput}) are concerned with the concepts of transport capacity or throughput capacity (as first proposed in \cite{gupta2000capacity}) or multihop delay (as in \cite{lin2006degenerate}) rather than queueing capacity and delay, as is our focus. These works also do not consider the traffic dynamics of the systems they study by assuming a backlogged/full buffer for tractability, an assumption that we relax in our work. As we mentioned in the introduction, we consider mobility models where wireless transmitters move with finite velocities. In contrast, most analytical work on the effect of mobility on wireless networks assumes i.i.d. redistribution of nodes in time slots (for example, see \cite{grossglauser2002mobility} and \cite{haenggi2010local}). Despite these differences, a common thread between this body of work and our results is the important observation that mobility improves performance. \\
	There has been a recent body of work that seeks to study problems that combine queueing and wireless networks (see \cite{zhong2016towards} for a brief survey of this class of problems). A number of these works seek to establish necessary and sufficient conditions for stability using stochastic dominance ideas (e.g., \cite{zhong2016stability, yang2020optimizing}). Other works make independence or mean-field assumptions (e.g., \cite{yang2018sir, yang2021understanding, chisci2017scalability, chisci2019uncoordinated}). The work closest to ours in terms of model (\cite{stamatiou2010random}) assumes an infinite mobility model, in constrast to the finite mobility models we assume. The effect of all these assumptions is to decorrelate interference across time slots, removing any time correlations. In our work, we analyze the system without making any independence or mean-field assumptions. Indeed, it is the time correlations so preserved that lead to the stochastic ordering results we establish later in the paper. Another line of thought treats the entire wireless network as a queue, dealing with the birth and death of transmitters and receivers (\cite{sankararaman2017spatial} and \cite{alammouri2019stability}) - but does not consider the effect mobility will have on the network dynamics.\\
	The preservation of time correlations that we mention above results in a queue that sees time-varying service rates. There is an existing line of work on time-varying queues modelled using non-homogenous Poisson processes as arrival and service processes (e.g., \cite{massey2002analysis}, \cite{feldman2008staffing}) - our service process is not amenable to this framework because it is not a Poisson process. Modelling more general arrival and service processes has been accomplished by considering Markov-modulated queues - where arrival and service rates are functions of the state of a continuous-time Markov chain with finite state space (the classical Gilbert model of queueing theory is a special case where the CTMC has two states). Under this framework, computational methods have been proposed to calculate the steady state probabilities (\cite{neuts1977m}, \cite{kao1989m}). To attempt to use such an approach, our model would have to be made Markovian by assuming knowledge of the interferer point process instead of SINR. Its evolution would then have to be modelled as a CTMC with infinite dimensional state space (one coordinate for every interferer) - rendering this method intractable.\\
	Finally, we review the concept of local delay, introduced in \cite{baccelli2010new}. The local delay in a network is the random number of time slots required by a node to transmit a packet successfully to its intended (single-hop) destination. This can instead be thought of as the number of time slots required for the packet located at the head of a queue such as the one we consider to be successfully transmitted. While local delay implicitly excludes queueing-level effects, it is the performance metric closest in concept to the queueing metrics of delay/buffer length that we study here. The mean local delay is computed as a spatial average of the local delays of all nodes in a network, where by contrast, our queueing system involves temporal evolution and averaging to obtain queue statistics. The infinite mobility model is applied to local delay in \cite{baccelli2010new} and \cite{haenggi2010local}. Another work that considers finite mobility (\cite{gong2013local}) uses a constrained mobility model where locations of nodes are independent excursions from a fixed home-location point process, allowing for tractability. In contrast, we consider an unconstrained mobility model. Finally, a body of work related to local delay is on the meta distribution (e.g., \cite{haenggi2015meta}) - the distribution of the conditional probabilities of SINR level-crossing events (such events are discussed in Section \ref{sec:correlation}), conditioned on network geometry.}\\
	\subsection{Contributions}
	\textcolor{black}{The two main questions that we answer are queueing-theoretic in nature: 1) when, if at all, is the queue stable, and 2) how do queue statistics change with varying degrees of mobility? The first question considers a notion of capacity that arises from queueing theory, i.e., the maximum input data rate that a buffer can support whilst remaining stable. In answering this question, we show that in the absence of mobility in the network, universal stability guarantees do not exist (Proposition \ref{prop:instability}). This result stems from the observation that a static network is exactly that - static and unchanging, and hence a queue will see the same interference at all times - dooming it to stability or instability forever, subject to the caprice of the interferer configuration around it. But all is not lost - we show that the introduction of \textit{any} non-zero mobility results in enough change in the system that we can guarantee stability of our queue - independently of the positive degree of mobility in the system (Theorem \ref{queue_stability}). We do so by establishing that the point process of interferers (Theorem \ref{pp_strongly_mixing}) and hence the interference process (Corollary \ref{interference_mixing}) seen by the queue are strongly mixing and ergodic, allowing us to use Loynes' well known result on queue stability (\cite{loynes1962stability}). The second question considers the queueing delay experienced by packets in this queue, and the effects of mobility on this metric. Considering a continuum of velocities, we find that increasing motion in the network results in an accelerated rate of variation in the service rate of the queue - which in turns induces an averaging effect that provides increasingly smooth, or less variable service (Lemma \ref{cvx_ordering_service}). The continuum of velocities in $(0, \infty)$ is hence shown to induce a continuum of stochastically ordered queue workloads (Theorem \ref{queue_order_thm}). The overall effect is that of decreasing mean queue workload and mean delay (Corollaries \ref{mean_workload_ordering} and \ref{mean_delay_ordering}). We use the theory of stochastic ordering to obtain these results. We also explain the queue ordering results we obtain by establishing results on the correlations that exist in moving wireless networks (Theorem \ref{positive_correlation_crossing} and Lemma \ref{positive_correlation_coefficient}). We supplement our theoretical results with simulations that study the effects of different mobility models and heavy-traffic approximations that we use to accurately estimate the mean queue workload.}\\
	The rest of this paper is structured as follows: in Section \ref{sec:model}, we present our queueing and wireless network models. In Section \ref{sec:mobility_guarantees_stability} we answer the first question we asked (concerning stability) and in Section \ref{sec:mobility_orders_workloads} we answer the second question (concerning the effect of increasing mobility on queue statistics). Section \ref{sec:correlation} formalizes some of the intuition about correlations in our model, which we use to explain the results we presented in Section \ref{sec:mobility_orders_workloads}. Section \ref{sec:interacting} discusses how our insights can be applied to the more general interacting queues setting, and Section \ref{sec:simulation} concludes with simulation and computational studies.
	\section{Model}
	\label{sec:model}
	We assume a continuous time model.  We assume that interferers (which are for all intents and purposes, transmitters) are initially distributed as a homogenous Poisson point process $\Phi_0$ of intensity $\Lambda$ on $\mathbb{R}^2$. Denote by $\Phi_t$ the point process of interferers at time $t$. We assume that a receiver of interest is fixed at the origin, with an associated transmitter transmitting to the receiver at unit power. The location of the transmitter is given by a point chosen uniformly at random on the circle centered at the origin and with radius $R$. We assume that all interferers transmit with unit power at all times. This assumption is for notational simplicity - our framework allows for a more general model that incorporates power control and/or contention protocols that are functions of $\Phi_t$ (we will briefly explain how the results we establish still hold in their respective sections). A general isotropic path-loss function $l(.): \mathbb{R}^+ \rightarrow \mathbb{R}^+ \cup \{0\}$ is assumed, with the following properties: 1) $\lim_{ r \rightarrow \infty} l(r) = 0$ (in other words, we assume that an interferer that is infinitely far away does not have any significant effect at the origin), and 2) $\int rl(r) dr < \infty$ (which ensures that the interference shot-noise at any point is finite a.s.). Assume that $F_j^0(t)$ (the fading between interferer $j$ and the origin) is a general stationary stochastic process with continuous sample paths for all $j$, and that $F_j^0(t)$  is independent across interferers, i.e., $F_j^0(t) \independent F_i^0(t), i \neq j \text{ and } \forall t$. \textcolor{black}{We assume that $F_j^0(t)$ is a strongly mixing process for all $j$ (formally defined in Def. \ref{strong_mixing_def}), which essentially means that $F_j^0(t)$ and $F_j^0(t+s)$ are independent for sufficiently large $s$.}  
	Let $\gamma$ denote the thermal noise power at the receiver. Let $S_0(t)$ be the received signal power at the origin, and $I_0(t)$ be the interference seen at the origin. The SINR seen by the receiver at the origin at time $t$ can then be described by:
	\begin{align}
	\label{sinr_origin}
	\text{SINR}_0(t) &= \frac{S_0(t)}{I_0(t) + \gamma}
	= \frac{l(R)F_0^0(t)}{\sum_{x_j \in \Phi_t}l(|x_j|)F_j^0(t) + \gamma}.
	\end{align}
	\textcolor{black}{We assume here that interference is treated as noise - this is a common assumption in stochastic geometry literature (see the books \cite{baccelli2001coverage, baccelli2009stochastic2, blaszczyszyn2018stochastic} and the review/tutorial articles \cite{haenggi2009stochastic, andrews2016primer})}.\\Our results also hold for the generalized model where the transmitter associated with the receiver moves around (i.e., if the distance $R$ is not constant but instead varies over times as $R(t)$), as long as we assume that the transmitter employs a form of channel inversion and transmits with power proportional to $l(R(t))^{-1}$. This has the effect of ensuring constant received signal power up to fading variables. We can also analyze the case where the receiver is moving (relevant in the ad-hoc setting) - we will briefly address this in Sections \ref{sec:mobility_guarantees_stability} and \ref{sec:mobility_orders_workloads}.\\
	\textcolor{black}{In all versions of the model, the primary object of interest is the behaviour of $I_0(t)$, which is the value of a time-space interference shot noise field at the origin (see \cite{baccelli2009stochastic}, Section 2.3.4), with the positions of interferers changing over time. This time-variation is governed by our mobility model. We impose two conditions on any mobility model we consider: (i) the (possibly random) trajectories that any two different interferers follow in any interval of time must be independent of each other and (ii) these trajectories should not be a function of the locations of interferers in space - i.e., the motion of any one interferer should be independent of all other interferers, and also independent of that interferer's location in space.} We will now present three examples of mobility models that satisfy these independence conditions. \\\textit{Random Direction Model (RD): }Assume that at the beginning of time, every interferer $i$ samples an angle $\theta_i$ uniformly at random from the interval $[0, 2\pi]$. Each interferer will then move with constant velocity $v \in [0, \infty)$ along this angle - hence, in a time interval $\Delta t$, interferer $i$ will be displaced by $(v\Delta t \cos \theta_i, v\Delta t \sin \theta_i)$.\\
	\textit{Random Waypoint Model (RWP): }We consider a version of the random waypoint model where every interferer $i$ moves with constant velocity $v$ and angle $\theta_i$ sampled uniformly from $[0, 2\pi]$ for a deterministic amount of time $\Delta p$. At the end of every such time interval, every node resamples its angle of motion and continues moving.\\
	\textit{Brownian Motion (BW): } We assume that every interferer moves according to an independent 2-D Wiener process, with the variance of each one-dimensional Wiener process being $\sigma^2$. While the mean displacement of an interferer in any time interval will be zero, the Euclidean norm of displacement is given by a Rayleigh distribution with parameter $\sigma$. The mean of such a distribution is $\sigma\sqrt{\frac{\pi}{2}}$. In a time interval $\Delta$, we expect that an interferer moving with velocity $v$ will be displaced by $v\Delta$. Hence, during simulations in Section \ref{sec:simulation}, the variance of the 1-D Wiener processes is chosen such that $\sigma\sqrt{\frac{\pi}{2}} = v\Delta$, where $\Delta$ is the size of the small time interval in which the system evolves in simulations. This can easily be extended to more general diffusion processes.\\
	 By the displacement theorem for PPPs (\cite{baccelli2009stochastic}, Theorem 1.3.9), it follows that for all three models, $\Phi_t$ is a homogenous PPP for all $t$. Note that all queue comparison results presented in Section \ref{sec:mobility_orders_workloads} hold for these models and for more general mobility models that satisfy the aforementioned independence conditions.
	\subsection{Queueing model}
	We assume that the transmitter-receiver pair of interest is equipped with a queue. Data arriving to the transmitter enters the queue, and must be transmitted to the receiver, upon which the data leaves the queue. To describe the queueing dynamics, we assume that a discrete time scale is overlaid on our continuous time scale, with time slots of duration $\delta $. We consider a single-server, infinite buffer queue. We assume that packets, each of unit size, arrive to the queue according to an external discrete time stationary and ergodic arrival process, $A(n)$, which is independent of the PPP and the interferer motion processes. In other words, the queue workload $W_n$ (amount of data waiting in the queue for transmission) at time $n\delta$ increases by $A(n)$. Note that we can approximate the case of a continuous time arrival process by making $\delta $ arbitrarily small. The departure process associated with this queue depends on the interference time-series that the receiver sees. We assume a continuous-time random service process, $s(t)$, \textcolor{black}{upon which we place the three restrictions that $s(t)$ must be a measurable function of $\text{SINR}_0(t)$, that $s(t)$ must be integrable and that $s(t)=0$ when $\text{SINR}_0(t)=0$}. Examples of possible service processes include: a) Shannon rate: $s(t) = \log_2(1 + \text{SINR}_0(t))$ and b) truncated Shannon rate: $s(t) = \log_2(1 + \text{SINR}_0(t))\mathbbm{1}[\text{SINR}_0(t)>T]$. The former assumes adaptive modulation and coding based on SINR, whereas the latter assumes the same as long as the SINR is above a certain threshold $T$. It is easy to see that $s(t)$ is a stationary process. Such a model is in part a fluid queueing model - while we are implicitly assuming that data arrives as packets, we assume that it is transmitted continuously as a data stream. The cumulative service that the queue sees in the time interval $(t_1, t_2]$ is then denoted by $S(t_1, t_2) = \int_{t_1}^{t_2} s(t) dt.$
	Since the arrival process is in discrete time, the queue also evolves in discrete time according to the Lindley recursion:
	\begin{align}
	\label{lindley_recursion}
	W_{n+1} = \left(W_{n} + A(n)  - V(n) \right)^+,
	\end{align}
	where $W_n$ is the workload of the queue at time slot $n$ or at time $n \delta $, $V(n) = S(n\delta , (n+1)\delta )$ and $x^+ = \max(0, x)$.\\
	This queue is a G/G/1/$\infty$ queue - where we emphasize that service to the queue is stationary, generally distributed, and \textit{dependent}, namely \textit{not independent over time slots} - seeing a large service rate in one time slot will increase the likelihood of a large service rate in the next time slot, and similarly for small service rates. This fact stems from the finite mobility we consider in our model, which induces positive time-correlations in interferer configurations and hence in $s(t)$. We will make the idea of this correlation more precise in Section \ref{sec:correlation}. This lack of independence makes an already non-trivial problem even more complex.
	\begin{table}
		\centering
		\begin{tabular}{ ||c|c|| } 
			\hline
			$\Phi_t$ & Point process of interferers at time $t$  \\ 
			\hline
			$\Lambda$ & Intensity of interferer point process  \\ 
			\hline
			$R$ & Distance between receiver at origin and its transmitter \\
			\hline
			$l(r)$ & Path-loss function\\
			\hline
			$F^0_j(t)$ & Fading between interferer $j$ and the origin\\
			\hline
			$\gamma$ & Thermal noise\\
			\hline
			$I_0(t)$ & Interference shot-noise at the origin\\
			\hline
			$v$ & Velocity of interferers\\
			\hline
			$\delta$ & Time-scale at which queue evolves\\
			\hline
			$s(t)$ & Service process for queue\\
			\hline
			$A(n)$ & Arrivals to queue in time-slot $n$\\
			\hline
			$V(n)$ & Service to queue in time-slot $n$\\
			\hline
			$W_n$ & Queue workload at time-slot $n$\\
			\hline
		\end{tabular}
		\label{table:notation}
		\caption{\textcolor{black}{Table of notation.}}
	\end{table}
	\section{Mobility Allows  For Stability Guarantees}
	\label{sec:mobility_guarantees_stability}
	In this section, we will contrast the effects of a static configuration of inteferers on the evolution of our queue with the effects of a configuration of interferers that possess a non-zero degree of mobility, modelled using the random direction model. The mixing results that we show in this section will extend to the other two models we detailed and to other mobility models as well, but we do not include those proofs here. 
	For this section, denote by $(\Omega, \mathcal{F}, \mathbb{P})$ the probability space upon which the initial PPP of interferers are defined. Consider a static version of our model, with $v=0$. The initial configuration of interferers will be fixed for all times, i.e., $\Phi_0 = \Phi_t \text{ } \forall t$. The service process $V(n)$ is clearly stationary, but it is not ergodic - time averages of $V(n)$ will not be the same as ensemble averages over point process realizations. Assume that the arrival process $A(n)$ has rate $\lambda = \mathbb{E}[A(n)]$. \\Since the queue workload is driven in part by a non-ergodic sequence $V(n)$, Loynes' Theorem does not apply for this queue (see Section 2.1, \cite{baccelli2013elements} for the necessary conditions to apply Loynes' Theorem) and stability is not guaranteed for any $\lambda$. Indeed, we will now see that instances of \textit{instability} are guaranteed for all $\lambda$. Note that there exists a subset  $\Omega_U$ of $\Omega$ that has non-zero measure and that corresponds to point process realizations that have interferers clustered near the origin, resulting in the instability of the queue. The other side of this coin is also true - there will be interferer configurations (for $\omega \in \Omega_U^C$) where all interferers are sufficiently far away and where the queue will experience high rates of service (see Fig. \ref{fig:good} and \ref{fig:bad}). The stationary queue workload for $\omega \in \Omega_U^C$ will be finite, and the queue workload for $\omega \in \Omega_U$ will be infinite. Characterizing $\Omega_U$ is straightforward - $\Omega_U = \{\omega \in \Omega : \lambda > \delta \mathbb{E}[s(t)|\Phi(\omega)]  \}$, with $\Phi(\omega) = \Phi_0$. \\As an example, let $s(t)$ be the Shannon rate, $s(t) = \log_2[1+\text{SINR}_0(t)]$. Then, 
	\begin{align}
	\label{eq:example_shannon}
	\Omega_U = \left\{\omega \in \Omega : \lambda > \delta\mathbb{E}\left[\log_2\left(1 + \frac{l(R)F_0^0(0)}{\sum_{x_j \in \Phi(\omega)}l(|x_j|)F_j^0(0) + \gamma}\right)\Big| \Phi(\omega) \right] \right\}.
	\end{align}
	It is sufficient to consider the expected value of rate (conditioned on $\Phi$ and with expectations taken over the fading process values at time $t=0$) because i) from our assumptions on the fading processes, we can conclude that they are strongly mixing (see Def. \ref{strong_mixing_def}) and are hence ergodic and ii) \textit{conditionally upon} $\Phi$, the service process of the queue is ergodic and iii) Loynes' Theorem is concerned only with the comparison of average input and output rates of the queue. The set $\Omega_U$ in (\ref{eq:example_shannon}) is non-trivial because for any $\lambda$, we can find a non-trivial collection of interferer positions (eq., point process realizations) such that interferers are sufficiently closely clustered around the origin (as in Fig. \ref{fig:bad}), resulting in high interference and hence an average Shannon rate (conditioned on interferer positions) that is lower than $\lambda$.\\
	\textcolor{black}{We can make the computation of $\mathbb{P}(\Omega_U)$ explicit in certain cases by noting that (for a general service process $s(t)$), $\mathbb{P}(\Omega_U) = \mathbb{P}\left[s(0) < \frac{\lambda}{\delta}\right]$, which we can compute using standard techniques. If $s(t) = \log_2[1+\text{SINR}_0(t)]$ (like in (\ref{eq:example_shannon})) and assuming Rayleigh fades with mean $1/\mu$, then we have (see Proposition 16.2.2 in \cite{baccelli2009stochastic2}):
\begin{equation}
	\mathbb{P}(\Omega_U) = 1 - e^{-\gamma \mu T/l(R)} \exp\left(-2\pi\Lambda \int_0^\infty \frac{u}{1 + l(R)/(Tl(u))}du\right)\text{, where $T=e^\frac{\lambda}{\delta} - 1$}.
\end{equation}
We can characterize $\mathbb{P}(\Omega_U)$ even when $\Phi_0$ is not a Poisson process, as long as the Laplace transform of the shot-noise process of $\Phi_0$ is known (e.g., if $\Phi_0$ is a determinantal point process, we can use Lemma 9 in \cite{li2015statistical} to compute $\mathbb{P}(\Omega_U)$, which will again be positive).}
		 	\begin{figure}[h]
		 		\label{good_and_bad}
		 	\begin{center}
		 		\begin{subfigure}[h]{0.45\columnwidth}
		 			\includegraphics[width=\linewidth]{./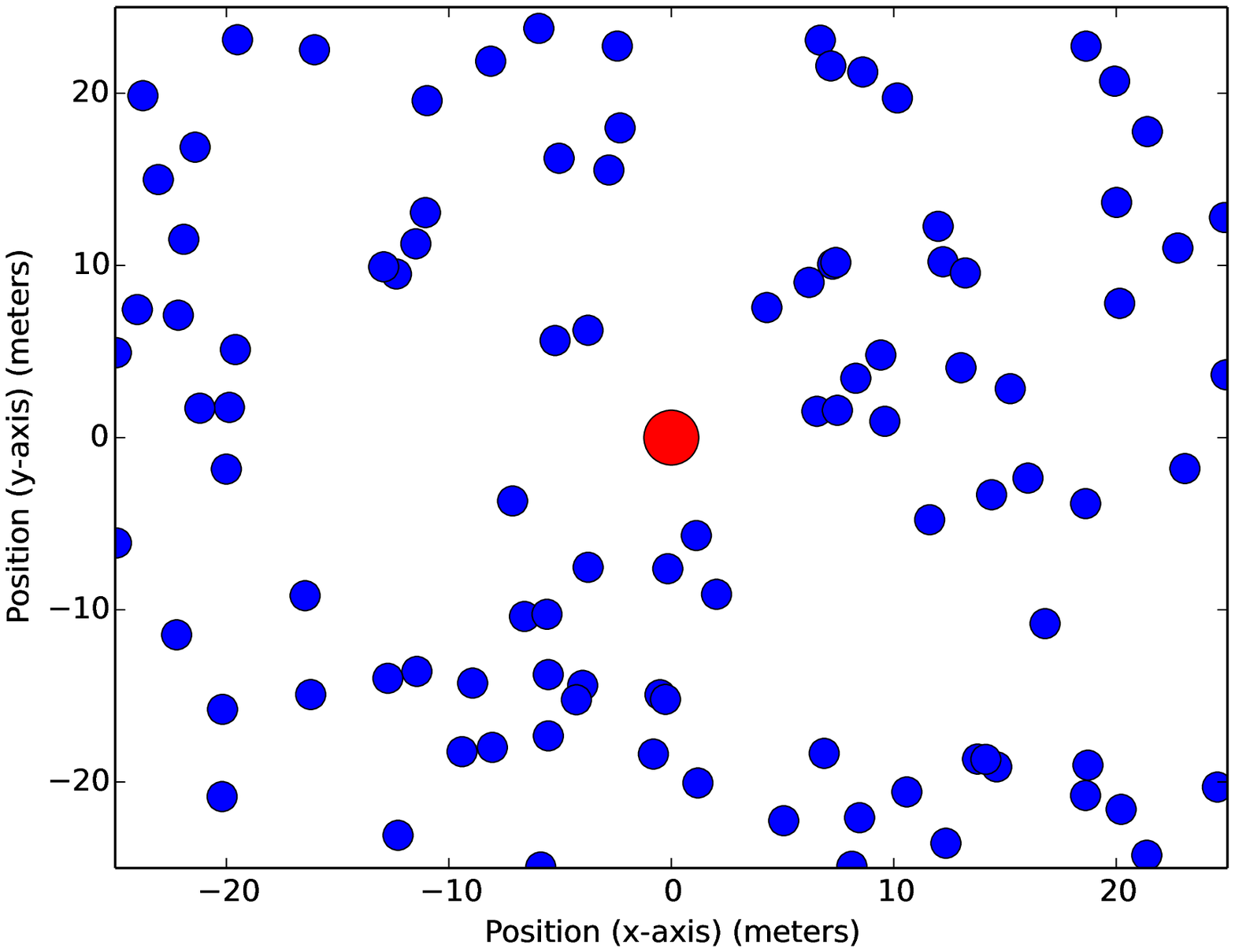}
		 			\caption{$\omega \in \Omega_U^C$}
		 			\label{fig:good}
		 		\end{subfigure}
		 		\begin{subfigure}[h]{0.45\columnwidth}
		 			\includegraphics[width=\linewidth]{./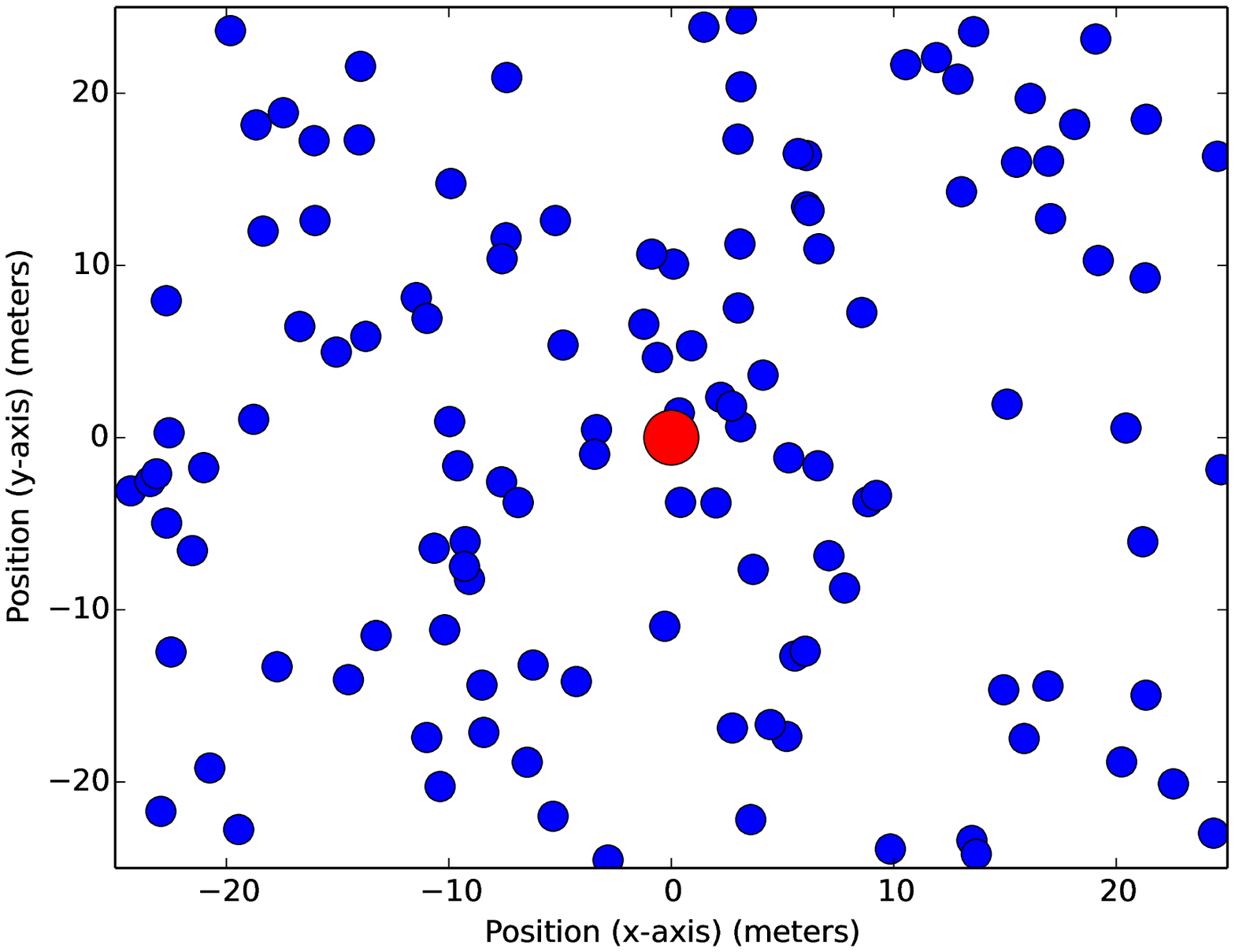}
		 			\caption{$\omega \in \Omega_U$}
		 			\label{fig:bad}
		 		\end{subfigure}
		 	\end{center}
	 	\caption{Examples of ``good" and ``bad" configurations of interferers. The red circle is the receiver at the origin, and blue circles are interferers.}
		 \end{figure}
		 \begin{proposition}
		 	\label{prop:instability}
		 	For a static field of interferers, the stationary queue workload will be infinite with positive probability.
		 \end{proposition}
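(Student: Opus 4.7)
The plan is to reduce the instability statement to a conditional application of Loynes' theorem, and then to exhibit a positive-probability set of frozen interferer configurations on which the conditional stability criterion fails.

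First, I would condition on $\Phi_0 = \Phi$. Because $v = 0$, we have $\Phi_t \equiv \Phi$ for all $t$, so conditionally on $\Phi$ the service $s(t)$ is a measurable functional of the independent strongly mixing fading processes $\{F_j^0(t)\}_{j}$ alone. Strong mixing of each $F_j^0$ combined with independence across $j$ yields a jointly stationary ergodic fading process, so the $\Phi$-conditional sequence $V(n) = S(n\delta,(n+1)\delta)$ is stationary and ergodic. Since $A(n)$ is independent of $\Phi$ and itself stationary ergodic, the conditioned queue satisfies Loynes' hypotheses, and Loynes' theorem applied conditionally gives $W_n \to \infty$ almost surely under the conditional law whenever $\lambda > \delta\,\mathbb{E}[s(0)\mid\Phi]$. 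Thus the event $\Omega_U = \{\omega : \lambda > \delta\,\mathbb{E}[s(0)\mid\Phi(\omega)]\}$ is an event of instability, and it remains to prove $\mathbb{P}(\Omega_U) > 0$.

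For this, I would construct configurations with arbitrarily many interferers close to the origin. For any $r > 0$ with $l(r) > 0$ and any $N \in \mathbb{N}$, the PPP event $E_{N,r} = \{\Phi_0(B(0,r)) \geq N\}$ has probability $e^{-\Lambda\pi r^2}(\Lambda\pi r^2)^N/N! > 0$. On $E_{N,r}$ the interference satisfies $I_0(t) \geq N\,l(r) \cdot \min_{j : x_j \in B(0,r)} F_j^0(t)$, so by choosing $N$ large the SINR at the origin can be driven below any prescribed positive threshold with probability arbitrarily close to one. The assumptions that $s$ is measurable, integrable, and vanishes where $\text{SINR}_0 = 0$ then allow a dominated-convergence argument to conclude that $\mathbb{E}[s(0)\mid\Phi]$ can be made smaller than $\lambda/\delta$ on a subset of $E_{N,r}$ of positive probability, which establishes $\mathbb{P}(\Omega_U) > 0$.

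The main obstacle is this last quantitative step: making precise that ``many interferers near the origin'' actually forces the conditional expected service below $\lambda/\delta$. Since $s$ is only assumed measurable and integrable, and the fading law only stationary and strongly mixing, this calls for either a truncation of $s$ together with dominated convergence, or the use of additional structure when available, such as the explicit Laplace functional computation already sketched in the discussion preceding the statement, which makes the relevant probability entirely explicit in the Shannon-rate, Rayleigh-fading case.
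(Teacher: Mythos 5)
Your overall route is the same as the paper's: condition on the frozen configuration $\Phi$, use strong mixing and independence of the fading processes to get a conditionally stationary ergodic service sequence, apply Loynes' theorem conditionally to identify $\Omega_U=\{\lambda>\delta\,\mathbb{E}[s(0)\mid\Phi]\}$ as the instability event, and then argue $\mathbb{P}(\Omega_U)>0$ by exhibiting clustered configurations. The paper compresses all of this into one sentence plus the surrounding discussion (and, for the Shannon-rate/Rayleigh case, an explicit Laplace-transform formula for $\mathbb{P}(\Omega_U)$), so your write-up is a faithful, more detailed version of the intended argument rather than a different one.

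One step as written would fail, though it is easily repaired. The bound $I_0(t)\geq N\,l(r)\cdot\min_{j:x_j\in B(0,r)}F_j^0(t)$ does not grow with $N$: for i.i.d.\ exponential fades of rate $\mu$, the minimum of $N$ of them is exponential of rate $N\mu$, so $N\cdot\min_j F_j^0(t)$ has a fixed distribution independent of $N$ and cannot drive the SINR below an arbitrary threshold. The correct estimate is $I_0(t)\geq \bigl(\inf_{u\leq r}l(u)\bigr)\sum_{j:x_j\in B(0,r)}F_j^0(t)$, whose conditional mean grows linearly in $N$, after which your dominated-convergence step goes through. Your closing caveat is also well placed: with $s$ only measurable, integrable, and vanishing at $\text{SINR}=0$, the claim can actually fail (take $s=\mathbbm{1}[\text{SINR}_0>0]$ and $\lambda<\delta$, for which every configuration gives conditional mean service $1$); the proposition implicitly needs $s(\text{SINR})\to 0$ as $\text{SINR}\to 0$, which holds for the Shannon-rate examples the paper has in mind but is not forced by the stated hypotheses. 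The paper's own one-line proof glosses over both of these points, so your flagging of the quantitative step as the real content is accurate.
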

	 \begin{proof}
	 	Follows from noting that the stationary queue workload is infinite on $\Omega_U$, which is a set with non-zero measure on the space $(\Omega, \mathcal{F}, \mathbb{P})$.
	 \end{proof}
 \textcolor{black}{As we noted above, $\Omega_U$ can have non-zero measure for more general point processes than just the Poisson process, in which case Proposition \ref{prop:instability} will still hold.}
	  In words, Proposition \ref{prop:instability} says that there is no arrival rate for which we can guarantee sample path stability. From a system design perspective, this is not ideal. In order to guarantee that for a set of input data rates our wireless buffer is \textit{always} stable, one solution as we will show now is to introduce mobility into the network.
	  We first present some useful definitions.\\
	  Consider a probability space $(\Omega, \mathcal{F}, P).$ For any two $\text{sub }\sigma$-algebras $\mathcal{A}$ and $\mathcal{B}$ of $\mathcal{F}$, define:
	  \begin{equation*}
	  \alpha(\mathcal{A}, \mathcal{B}) = \sup |P(A \cap B) - P(A)P(B)|, A\in\mathcal{A}, B\in\mathcal{B}
	  \end{equation*}
	  Let $X = \{X_k\}$ be a random process. Define
	  \begin{equation*}
	  \mathcal{F}_j^l = \sigma(X_k, j \leq k \leq l).
	  \end{equation*}
	  Then define 
	  \begin{equation*}
	  \alpha(n) = \sup_j \alpha(\mathcal{F}_{-\infty}^j, \mathcal{F}_{j+n}^\infty).
	  \end{equation*}
	  \begin{definition}
	  	\label{strong_mixing_def}
	  	\textbf{\cite{bradley2005basic} Strong mixing for stationary processes:} 
	  	If $X$ is a stationary process, then $X$ is said to be strongly mixing if $\alpha(n) \rightarrow 0$ as $n \rightarrow \infty$, where $\alpha(n) = \alpha(\mathcal{F}_{-\infty}^0, \mathcal{F}_{n}^\infty)$.
	  \end{definition}
	  It trivially follows that $F_j^0(t)$ is strongly mixing, since $\alpha(n)=0$ for $n \geq \tau^c_j$ (from the coherence time assumptions we made about the channel).
	  \begin{definition}
	  	\label{strong_mixing_markov}
	  	\textbf{\cite{bradley2005basic} Strong mixing for stationary Markov processes:} If $X$ is a Markov process, then $X$ is said to be strongly mixing if $\alpha(n) \rightarrow 0$ as $n \rightarrow \infty$, where $\alpha(n) = \alpha(\sigma(X_0), \sigma(X_{n}))$.
	  \end{definition}
	  Denote by $\alpha(X, n)$ the value of $\alpha(n)$ for a stochastic process $X = \{X_t\}$. We then have the following lemma, which we will use in the sequel.
	  \begin{lemma}
	  	\label{mixing_condition_compositions}
	  	(\cite{bradley2005basic}, Thm 5.2) Suppose that for every $n=1, 2, 3, ...$, $X^{(n)} = \{X^{(n)}_t\}$ is a stochastic process taking values in the set $S_n$, and that these processes are independent of each other. Suppose that for every $t$, $h_t: S_1 \times S_2 \times S_3 ... \rightarrow \mathbb{R}$ is a Borel function. Define $X = \{X_t\}$ by $X_t = h_t(X^{(1)}_t, X^{(2)}_t, ...)$. Then for all $m$, 
	  	\begin{equation*}
	  	\alpha(X, m) \leq \sum_{n=1}^\infty \alpha(X^{(n)}, m).
	  	\end{equation*}
	  \end{lemma}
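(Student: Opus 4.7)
The plan is to reduce the claim to an additivity property of the $\alpha$-mixing coefficient for independent pairs of $\sigma$-algebras, prove this property by a telescoping identity on rectangles, and then lift from finite to countable collections of pairs.

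Since $X_t = h_t(X^{(1)}_t, X^{(2)}_t, \ldots)$ is a deterministic Borel function of the components, the past $\sigma$-algebra satisfies $\mathcal{F}_{-\infty}^j(X) \subseteq \bigvee_n \mathcal{A}_n$ with $\mathcal{A}_n := \mathcal{F}_{-\infty}^j(X^{(n)})$, and analogously $\mathcal{F}_{j+m}^\infty(X) \subseteq \bigvee_n \mathcal{B}_n$ with $\mathcal{B}_n := \mathcal{F}_{j+m}^\infty(X^{(n)})$. Because $\alpha$ is monotone in both its arguments, and the mutual independence of the processes $X^{(n)}$ entails that the pairs $(\mathcal{A}_n, \mathcal{B}_n)$ are jointly independent across $n$, it suffices to show, for any sequence of mutually independent pairs of $\sigma$-algebras,
\begin{equation*}
\alpha\Bigl(\bigvee_n \mathcal{A}_n, \bigvee_n \mathcal{B}_n\Bigr) \leq \sum_n \alpha(\mathcal{A}_n, \mathcal{B}_n),
\end{equation*}
and then take the supremum over $j$ to obtain the stated bound on $\alpha(X,m)$.

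The core step is the two-pair version. Given independent pairs $(\mathcal{A}_1, \mathcal{B}_1) \perp (\mathcal{A}_2, \mathcal{B}_2)$ and rectangles $A = A_1 \cap A_2$, $B = B_1 \cap B_2$, the independence between pairs yields the telescoping decomposition
\begin{align*}
P(A \cap B) - P(A)P(B) &= \bigl[P(A_1 \cap B_1) - P(A_1)P(B_1)\bigr]\, P(A_2 \cap B_2) \\
&\quad{}+ P(A_1)P(B_1)\bigl[P(A_2 \cap B_2) - P(A_2)P(B_2)\bigr],
\end{align*}
each term of which is bounded in absolute value by $\alpha(\mathcal{A}_i, \mathcal{B}_i)$ using $|P(A_2 \cap B_2)|, |P(A_1)P(B_1)| \leq 1$. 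Induction in $N$ extends this to any finite collection of pairs.

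The main obstacle is promoting the two-pair bound from rectangles to arbitrary elements of $\mathcal{A}_1 \vee \mathcal{A}_2$ and $\mathcal{B}_1 \vee \mathcal{B}_2$ without losing the constant $1$ in front of each summand. The cleanest route is to replace the definition of $\alpha$ by its Ibragimov-type variational characterization $\alpha(\mathcal{A}, \mathcal{B}) = \tfrac{1}{4}\sup \{|E[fg] - E[f]E[g]| : |f|, |g| \leq 1,\ f \in \mathcal{A},\ g \in \mathcal{B}\}$, after which the telescoping identity extends verbatim to products $f = f_1 f_2$, $g = g_1 g_2$, and arbitrary bounded measurables on $\bigvee_i \mathcal{A}_i$ and $\bigvee_i \mathcal{B}_i$ are reached by $L^1$-density of linear combinations of such products. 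The passage $N \to \infty$ is then routine: $\bigvee_{n \leq N} \mathcal{A}_n \uparrow \bigvee_n \mathcal{A}_n$ and monotone (or bounded) convergence give the countable inequality, trivially valid if the right-hand side is infinite.
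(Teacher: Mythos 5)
The paper does not actually prove this lemma; it quotes it from Bradley's survey, where it rests on a Cs\'aki--Fischer-type subadditivity property of $\alpha$ over independent pairs of $\sigma$-fields. Your reduction to that property is correct: $\mathcal{F}_{-\infty}^{j}(X)\subseteq\bigvee_n\mathcal{A}_n$ and $\mathcal{F}_{j+m}^{\infty}(X)\subseteq\bigvee_n\mathcal{B}_n$, the $\sigma$-fields $\mathcal{A}_n\vee\mathcal{B}_n$ are independent across $n$, $\alpha$ is monotone in both arguments, and your two-pair telescoping identity on rectangles (and on products $f_1f_2$, $g_1g_2$ under the variational characterization) is exactly the right algebraic identity. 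The induction in $N$ and the passage $N\to\infty$ are also fine.

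The gap is in the step you yourself flag as the main obstacle: passing from products to arbitrary elements of $\mathcal{A}_1\vee\mathcal{A}_2$ and $\mathcal{B}_1\vee\mathcal{B}_2$. $L^1$-density of linear combinations of products does not preserve the bound. If $f=\sum_k c_k f_1^{(k)}f_2^{(k)}$, bilinearity gives $|\mathrm{Cov}(f,g)|\le\sum_k|c_k|\,\bigl|\mathrm{Cov}(f_1^{(k)}f_2^{(k)},g)\bigr|$, and the constraint $|f|\le 1$ on the combination does not control $\sum_k|c_k|$, so the constant in front of $\alpha(\mathcal{A}_1,\mathcal{B}_1)+\alpha(\mathcal{A}_2,\mathcal{B}_2)$ is lost; single products (without linear combinations) are not $L^1$-dense in the unit ball of $L^\infty(\mathcal{A}_1\vee\mathcal{A}_2)$, and a Dynkin-class argument on sets fails for the same reason (the class of sets satisfying the inequality with a fixed constant is not closed under proper differences). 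The standard repair is the law of total covariance: with $\mathcal{F}_2=\mathcal{A}_2\vee\mathcal{B}_2$, write $\mathrm{Cov}(f,g)=\mathbb{E}\bigl[\mathrm{Cov}(f,g\mid\mathcal{F}_2)\bigr]+\mathrm{Cov}\bigl(\mathbb{E}[f\mid\mathcal{F}_2],\mathbb{E}[g\mid\mathcal{F}_2]\bigr)$; a monotone-class argument shows $\mathbb{E}[f\mid\mathcal{F}_2]$ is $\mathcal{A}_2$-measurable and $\mathbb{E}[g\mid\mathcal{F}_2]$ is $\mathcal{B}_2$-measurable (each bounded by $1$), so the second term is at most $4\alpha(\mathcal{A}_2,\mathcal{B}_2)$, while the independence of $\mathcal{A}_1\vee\mathcal{B}_1$ from $\mathcal{F}_2$ makes the conditional law of the first pair equal to its unconditional law, bounding the first term a.s.\ by $4\alpha(\mathcal{A}_1,\mathcal{B}_1)$. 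With that substitution for your density step, the rest of your argument goes through.
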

In the presence of mobility in the network, we can state the following theorem.
\begin{theorem}
	\label{pp_strongly_mixing}
	If $v>0$, the process $\{\Phi_t\}$ is strongly mixing.
\end{theorem}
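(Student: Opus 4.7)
The plan is to realize $\{\Phi_t\}$ as a deterministic function of an underlying marked Poisson point process on $\mathbb{R}^2 \times S^1$ and to exploit the independent scattering property of the PPP together with the rigid geometry of straight-line trajectories. Attach to each interferer its initial direction $U_i$ as a mark, yielding a PPP $N = \sum_i \delta_{(X_i, U_i)}$ with intensity $\Lambda\, dx \otimes du/(2\pi)$. For any bounded Borel $K \subset \mathbb{R}^2$, the identity $\Phi_t(K) = N(S_t K)$ with $S_t K := \{(x, u) : x + vtu \in K\}$ represents every configuration observation as an $N$-count on an explicit region of $\mathbb{R}^2 \times S^1$. A Laplace-functional calculation shows that the map $(x, u) \mapsto (x + vtu, u)$ preserves the intensity measure, so $\{\Phi_t\}$ is stationary and $\alpha(n)$ depends only on the time gap.

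The geometric heart of the argument is the following separation property: for bounded $K, L \subset \mathbb{R}^2$ and any times $s \leq 0$, $t \geq n$, the sets $S_s K$ and $S_t L$ are disjoint in $\mathbb{R}^2 \times S^1$ once $vn$ exceeds $\sup\{|a - b| : a \in K, b \in L\}$. Indeed, a common $(x, u)$ would force the linear trajectory $r \mapsto x + vru$ to hit $K$ at some $r \leq 0$ and $L$ at some $r \geq n$, placing those two trajectory points at separation $\geq vn$ while constrained to lie in $K$ and $L$ respectively — impossible once $vn$ exceeds their maximum pairwise distance. Applying this to finite families $\{(s_i, K_i)\}$ in the past and $\{(t_j, L_j)\}$ in the future, any cylindrical past event $A$ and cylindrical future event $B$ depend on $N$ restricted to disjoint regions of $\mathbb{R}^2 \times S^1$ once $n$ exceeds a threshold $n_0$ determined by the windows, and the independent scattering property of $N$ then delivers $P(A \cap B) = P(A)P(B)$ exactly.

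The main obstacle is upgrading this exact independence of cylindrical events into $\alpha(n) \to 0$ for the full $\sigma$-algebras $\mathcal{F}_{-\infty}^0$ and $\mathcal{F}_n^\infty$, since the threshold $n_0$ grows with observation window size and no finite $n$ achieves exact independence of the full $\sigma$-algebras. I would address this by approximation: write $\mathcal{F}_{-\infty}^0$ as the monotone limit of the sub-$\sigma$-algebras $\mathcal{F}^{(R)}$ generated by observations at times in $[-R, 0]$ over sets of diameter at most $R$, and similarly for $\mathcal{F}_n^\infty$. Backward martingale convergence gives $L^1$-approximations $A_R \to A$ and $B_R \to B$, and the triangle inequality yields $|P(A \cap B) - P(A)P(B)| \leq 2[P(A \triangle A_R) + P(B \triangle B_R)]$ once $n$ exceeds a threshold $n_0(R)$ of order $R/v$ for the truncated windows. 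A diagonal choice $R = R(n) \to \infty$ with $R(n)/v < n/2$, combined with the equivalent characterization $2\alpha(\mathcal{F}, \mathcal{G}) = \sup_{B \in \mathcal{G}} \|P(B \mid \mathcal{F}) - P(B)\|_1$ and a monotone interchange of limits, then yields $\alpha(n) \to 0$.
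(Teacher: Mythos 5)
Your core construction is genuinely different from the paper's. The paper's proof computes the joint Laplace functional $\mathcal{L}_{\Phi_t,\Phi_{t+n}}(f,g)$ through the displacement kernel and shows it factorizes as $n\to\infty$ by dominated convergence, working with the two-time-point (Markov) formulation of Definition \ref{strong_mixing_markov}. You instead lift to the marked PPP $N$ on $\mathbb{R}^2\times S^1$ and exploit the rigidity of straight-line trajectories: the $N$-regions determining an observation over $K$ at time $s\le 0$ and over $L$ at time $t\ge n$ are disjoint once $vn>\sup\{|a-b|:a\in K,\ b\in L\}$ (the two trajectory points are $v(t-s)\ge vn$ apart yet confined to $K$ and $L$), so independent scattering gives \emph{exact} independence of such cylindrical past/future events at finite lags rather than asymptotic factorization. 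That separation lemma and the stationarity claim are correct, and this part is arguably sharper than the paper's argument.

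The gap is in the final upgrade to $\alpha(n)\to 0$ for the full $\sigma$-algebras. First, a repair: truncating to observation sets of \emph{diameter} at most $R$ truncates nothing, since counts over arbitrary bounded sets are sums of counts over small-diameter pieces, and two small sets at mutual distance $\approx vn$ are exactly where separation fails; you need to confine the observation windows to $B(0,R)$. Second, and more seriously, the monotone interchange runs the wrong way: since $\|P(B\mid\mathcal{F}^{(R)})-P(B)\|_1$ is nondecreasing in $R$, one gets $\alpha(\mathcal{F}_{-\infty}^0,\mathcal{F}_n^\infty)=\lim_{R\to\infty}\alpha(\mathcal{F}^{(R),-},\mathcal{F}^{(R),+}_n)$, so the quantity to control is the large-$R$ regime at fixed $n$ — precisely where $R\gtrsim vn$ and your separation gives nothing. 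The diagonal bound $2[P(A\triangle A_{R(n)})+P(B\triangle B_{R(n)})]$ decays at a rate depending on $A$ and $B$, so it does not control $\sup_{A,B}$. What your argument does establish is $P(A\cap B)\to P(A)P(B)$ for each fixed pair, i.e., mixing in the ergodic-theoretic sense and hence ergodicity of $\{\Phi_t\}$ — which is what the paper actually uses downstream for Loynes' theorem. Closing the gap to genuine $\alpha$-mixing requires a quantitative decorrelation bound that is uniform over the large-$R$ truncations (or retreating to the two-time-point Markov definition, where a uniformity argument is still needed beyond convergence in distribution).
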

\begin{proof}
	Note first that the process $\{\Phi_t\}$ is Markov and stationary. Denote by $s$ the random displacement that transforms the point process $\Phi_t$ to $\Phi_{t+1}$ for any time $t$. For the sake of the proof, we assume the random direction mobility model, but this is not necessary, and we address the question of generalizing to other mobility models at the end of this section. This random displacement $s$ is characterized by a law $G(.)$, which in our case uniformly samples angles for every point at time $t=0$ from the interval $[0, 2\pi]$, while holding speed constant. Hence, $s$ takes values in the boundary of the 2-D ball centered at the origin with radius $v$, $\partial B(0, v)$. From the definition of our mobility model, we know that the displacement that transforms the point process $\Phi_t$ to $\Phi_{t+n}$ is given by $ns$. To prove the theorem, we must show that in the limit $n\rightarrow \infty$, the point processes $\Phi_t$ and $\Phi_{t+n}$ are asymptotically independent. We will accomplish this using the joint Laplace functional of the two point processes. Consider two arbitrary Poisson point processes, $\Phi$ (homogenous with intensity $\Lambda$) and $\Phi'$. Assume that the motion that transforms $\Phi$ to $\Phi'$ can be represented by a probability kernel $p$, where $p(x, B)$ is the probability that a point located at $x$ in $\Phi$ is located in region $B$ once displacements have occurred to obtain $\Phi'$. We can express the joint Laplace functional of $\Phi$ and $\Phi'$ as
	\begin{align*}
	\mathcal{L}_{\Phi, \Phi'}(f, g)&=\mathbb{E}\left(e^{-\sum_i f(X_i) - \sum_i g(Y_i)}\right)\\
	&=\mathbb{E}\Bigg(  \left(\exp\left[\sum_j{\log{e^{-f(x_j)}}}\right]\right) \left(\exp \left[\sum_j \log \left(\int_{\mathbb{R}^2} e^{-g(y_j)}p(x_j, dy_j)\right) \right]\right)   \Bigg)\\
	&= \mathcal{L}_\Phi(h), \text{ where $h(x) = - \log \left[\int_{\mathbb{R}^2} e^{-f(x) -  g(y)} p(x, dy)\right]$}\\
	&= \exp \left[ - \int_{\mathbb{R}^2} \left(\left(1 - \int_{\mathbb{R}^2} e^{-f(x)-g(y)}\right) p(x, dy)\right) \Lambda(dx)\right]\\
	&=\mathcal{L}_{\Phi'}(g) \text{ exp} \Bigg[-\int_{\mathbb{R}^2}\left(1-e^{-f(x)}\right)\left(\int_{\mathbb{R}^2} e^{-g(y)} p(x, dy)) \right) \Lambda (dx)\Bigg].
	\end{align*}
	Setting $\Phi = \Phi_t$, $\Phi' = \Phi_{t+n}$ and considering the random direction mobility model, we get:
	\begin{align*}
	\mathcal{L}_{\Phi, \Phi'}(f, g)&=\mathcal{L}_{\Phi'}(g) \exp \Bigg[-\int_{\mathbb{R}^2}\left(1-e^{-f(x)}\right)\left(\int_{\partial B(0, v)} e^{-g(x + ns)} G(ds) \right) \Lambda dx\Bigg].
	\end{align*}
	As long as $g(.)$ is such that $g(x) \rightarrow 0$ in all directions as $x \rightarrow \infty$, we have that $\left(\int_{\mathbb{R}^2} e^{-g(x + ns)} G(ds) \right) \rightarrow 1$ as $n \rightarrow \infty$. This follows directly from the Dominated Convergence Theorem, where we will upper-bound the exponential by $1$. This in turn implies that:
	\begin{align*}
	\lim_{n\rightarrow \infty}\mathcal{L}_{\Phi, \Phi'}(f, g) = \mathcal{L}_\Phi(f) \mathcal{L}_{\Phi'}(g).
	\end{align*}
\end{proof}
\begin{corollary}
	\label{interference_mixing}
	The interference shot-noise process, $I_0(t)$, is strongly mixing. 
\end{corollary}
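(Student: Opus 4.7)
The plan is to realize $I_0(t)$ as a measurable functional of mutually independent strongly mixing stationary processes --- the interferer point process $\Phi_t$ and the fading processes $\{F_j^0(t)\}_{j \geq 1}$ --- and then invoke Lemma \ref{mixing_condition_compositions}. Concretely, I would label the interferers by increasing distance from the origin at time $0$ (a.s.\ unambiguous since $\Phi_0$ is simple and locally finite), so that each interferer $j$ has a well-defined trajectory $x_j(t)$. Set $X^{(1)}_t$ to be $\Phi_t$ equipped with this labeling and $X^{(j+1)}_t := F_j^0(t)$ for $j \geq 1$. By the model's independence assumptions, these processes are mutually independent and stationary.

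Next I would verify that $I_0(t) = \sum_{j \geq 1} l(|x_j(t)|) F_j^0(t) =: h\bigl(X^{(1)}_t, X^{(2)}_t, \ldots \bigr)$ for a Borel function $h$, with the series converging almost surely thanks to Campbell's theorem and the integrability hypothesis $\int r\, l(r)\, dr < \infty$. Lemma \ref{mixing_condition_compositions} then yields
\begin{equation*}
\alpha(I_0, m) \;\leq\; \alpha(X^{(1)}, m) + \sum_{j \geq 1} \alpha(F_j^0, m).
\end{equation*}
Theorem \ref{pp_strongly_mixing} provides $\alpha(X^{(1)}, m) \to 0$; the coherence-time observation in the text gives $\alpha(F_j^0, m) = 0$ for $m \geq \tau^c_j$, so under the natural i.i.d.\ assumption on the $F_j^0$ with a common coherence time $\tau^c$, the whole sum vanishes for $m \geq \tau^c$, and letting $m \to \infty$ completes the argument.

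The main obstacle is a labeling subtlety: Theorem \ref{pp_strongly_mixing} was proved for $\Phi_t$ as an unlabeled simple point process, whereas plugging it into the composition lemma alongside the $F_j^0$ requires a consistent indexing of interferers so that the fading variable attached to each point can be recovered. The cleanest workaround, which I would adopt if the labeled version turns out delicate, is to work throughout with the marked point process $\hat\Phi_t := \{(x_j(t), F_j^0(t))\}_j$: the joint Laplace-functional computation of Theorem \ref{pp_strongly_mixing} extends essentially verbatim to $\hat\Phi_t$ since the marks are i.i.d.\ across $j$ and independent of positions and motion, so $\hat\Phi_t$ is strongly mixing and $I_0(t)$, a measurable shot-noise functional of $\hat\Phi_t$, inherits the property. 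A secondary concern is the convergence of the infinite sum in the lemma, which is handled as above under $\sup_j \tau^c_j < \infty$.
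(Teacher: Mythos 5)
Your proposal follows essentially the same route as the paper: both express $I_0(t)$ as a Borel functional of the independent processes $\{\Phi_t\}$ and $\{F_j^0(t)\}_j$ (with points enumerated by distance from the origin), apply Lemma \ref{mixing_condition_compositions} to bound $\alpha(I_0,m)$ by the sum of the individual mixing coefficients, and conclude via Theorem \ref{pp_strongly_mixing} together with the finite-coherence-time property of the fades. The labeling subtlety and the marked-point-process workaround you flag are legitimate refinements that the paper's proof glosses over, but they do not change the underlying argument.
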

\begin{proof}
Using the notation of Lemma \ref{mixing_condition_compositions}, let $X^{(1)} = \Phi_t$,  and $X^{(2)}, X^{(3)}, ...$ be the fading processes. For any $t$, $X^{(1)}_t = \Phi_t$ is a random measure that takes values in $\mathbb{M}(\mathbb{R}^2)$, the space of locally finite measures on $\mathbb{R}^2$ (see Chapter 1 of \cite{BBK} for more on the formalism of point processes) and $X_t^{(k)}, k>1$ are random variables that take values in $\mathbb{R}$. Further, let $h_t= h \text{ } \forall t$, where $h: \mathbb{M}(\mathbb{R}^2) \times \mathbb{R} \times \mathbb{R} \times ... \rightarrow \mathbb{R}$ and $h(X^{(1)}_t, X^{(2)}_t, X^{(3)}_t,...) = I_0(t) =  \sum_{x_j \in \Phi_t}l(|x_j|)F_j^0(t)$. We argue that $h$ is Borel-measurable by arriving at $h$ via the following compositions: first, an enumeration of the points of $\Phi_t$ in increasing order of their Euclidean distance from the origin - such an enumeration is measurable (see Chapter 1.6.2 in \cite{BBK}) and second, composition with the $L_2$ norm function $|\text{ . }|$ and path-gain function $l(.)$. Each $l(|x_j|)$ so obtained is multiplied by $F_j^0$ and then summed over all $j$, preserving measurability. We can hence apply Lemma \ref{mixing_condition_compositions} to get that $\alpha(X, m) \leq \sum_{n=1}^\infty \alpha(X^{(n)}, m).$  \\From our assumptions on fading, we have that $\alpha(X^{(n)}, m)= 0$ for $m$ larger than the corresponding coherence time and for $n \geq 2$. Theorem \ref{pp_strongly_mixing} implies that $\alpha(X^{(1)}, m)\rightarrow 0$ as $m\rightarrow \infty$. \\
Hence, $\alpha(X, m) \rightarrow 0$ as $m \rightarrow \infty$.
\end{proof}
If the interferer nodes employ power control or contention protocols such that each of their transmission powers or each of their decisions of whether or not to transmit at a given time are measurable functions of only $\Phi_t$ and the fading processes' values at time $t$, a proof similar to the one above can be used to show that the interference process is still strongly mixing.\\
If the receiver is not fixed at the origin but is moving as well, we can still prove a version of the mixing results above. To do so, we place ourselves in the frame of reference of the receiver. The motion of interferers are now no longer independent, but are still conditionally independent if we condition on the motion of the receiver. A proof similar to the one above then can be used, with additional steps for conditioning and unconditioning using the tower rule of expectation. We only require that the motion of the receiver is such that the velocities of interferers in the frame of reference of the receiver are non-zero a.s.\\  
Since strong mixing implies ergodicity, we then have that $s(t)$ and $V(n) = S(n\delta, (n+1)\delta)$ are ergodic processes. We have now established that with non-zero mobility, the driving sequences of the queue, $A(n)$ and $V(n)$ are both ergodic and stationary. Since we also assumed that they are independent of each other, they are jointly stationary and ergodic. These are sufficient conditions to apply the theory of Loynes (\cite{loynes1962stability}) to our queue, and to obtain the main result of this section.
\begin{theorem}
	\label{queue_stability}
	When $v>0$, our queue is stable (workload has a finite limiting distribution) if and only if $\lambda = \mathbb{E}[A(n)] < \delta \mathbb{E}[s(0)]$.
\end{theorem}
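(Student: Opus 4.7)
The plan is to invoke Loynes' theorem for a G/G/1 queue driven by a jointly stationary and ergodic input pair $(A(n), V(n))$; most of the work has already been done in Corollary \ref{interference_mixing} and the preceding discussion, so the proof reduces to assembling three ingredients.

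First, I would establish that $V(n)$ is stationary and ergodic. Corollary \ref{interference_mixing} gives that $I_0(t)$ is strongly mixing, and by hypothesis each $F_j^0(t)$ is too. Since $s(t)$ is by assumption a measurable function of $\text{SINR}_0(t)$, and hence of $(I_0(t), F_0^0(t), \gamma)$, it is itself strongly mixing: the $\sigma$-algebras it generates are sub-$\sigma$-algebras of those generated by the driving process, so its $\alpha$-coefficients are dominated by those coefficients. Stationarity of $s(t)$ was already noted in Section \ref{sec:model}. The discrete sequence $V(n) = \int_{n\delta}^{(n+1)\delta} s(t)\,dt$ is then obtained by applying a fixed measurable functional to the $n\delta$-shifted continuous-time process $s(\cdot)$; because this functional commutes with the shift by $\delta$, $V(n)$ inherits stationarity and ergodicity from $s(t)$ (using that strong mixing implies ergodicity).

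Second, $A(n)$ is stationary and ergodic by hypothesis and is assumed independent of the PPP and of the interferer-motion and fading processes, hence independent of $V(n)$. Independence together with marginal stationarity and ergodicity yields joint stationarity and ergodicity of $(A(n), V(n))$, which are the standing hypotheses of Loynes' theorem.

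Third, Loynes' theorem applied to the Lindley recursion (\ref{lindley_recursion}) gives that the workload admits a finite stationary distribution if and only if $\mathbb{E}[A(n)] < \mathbb{E}[V(n)]$. A one-line computation by Fubini (justified by the integrability of $s(t)$) and stationarity gives $\mathbb{E}[V(n)] = \int_{n\delta}^{(n+1)\delta}\mathbb{E}[s(t)]\,dt = \delta\,\mathbb{E}[s(0)]$, yielding the announced criterion. The only non-mechanical point is the transfer of ergodicity from the continuous-time process $s(t)$ to the sampled sequence $V(n)$, but this is standard once one views $V(n)$ as a measurable factor of the shift dynamical system associated with $s(\cdot)$, so there is no substantive obstacle.
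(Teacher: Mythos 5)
Your proposal is correct and follows essentially the same route as the paper: the paper's argument is precisely to deduce ergodicity of $s(t)$ and $V(n)$ from Corollary \ref{interference_mixing} (strong mixing implies ergodicity), note that independence of $A(n)$ and $V(n)$ gives joint stationarity and ergodicity, and invoke Loynes' theorem, with $\mathbb{E}[V(n)]=\delta\,\mathbb{E}[s(0)]$ by stationarity. Your added care about transferring mixing through the measurable map $\mathrm{SINR}_0(t)\mapsto s(t)$ and from the continuous-time process to the sampled integrals $V(n)$ only makes explicit what the paper leaves implicit.
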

Note here that $\mathbb{E}[s(0)]$ is a spatial average over the PPP. As an example, if $s(t) = \log_2[1+\text{SINR}_0(t)]$ and assuming that $\gamma=0$, fading is Rayleigh and $l(r) = (Ar)^{-\beta}, A>0, \beta>2$, the below equation follows from the results presented in Section 16.2.3 of \cite{baccelli2009stochastic2}:
\begin{align*}
\mathbb{E}[s(0)] = \int_0^\infty \exp(-2\pi^2\Lambda R^2 v^{\frac{2}{\beta}}\beta^{-1}(\sin(2\pi/\beta))^{-1})/(v+1))dv.
\end{align*}
\textbf{Discussion: }While the result of Theorem \ref{queue_stability} may seem obvious, we would like to emphasize its significance. What we have shown is that introducing any non-zero mobility, however small, into the system causes the non-ergodic and rather unchanging nature of the queue to vanish - leaving us with a queue that can be stabilized and cope with \textit{``all" Poisson interferer configurations}. Indeed, under the assumption of non-zero mobility, it follows from mixing that the queue will eventually see all possible Poisson interferer configurations. Motion in the network has in a sense unified the spatial averages that are traditionally associated with stochastic geometry and the temporal averages that are traditionally associated with queueing theory. Note the contrast to the infinite or zero mobility models often considered in the analysis of mean local delay - under infinite mobility, the point process decorrelates in a single time slot. In ours, the point process decorrelates after a sufficiently long amount of time  - which, as we show, is still sufficient for a well-behaved queue. In the zero mobility case, it never decorrelates.
\subsection{General Mobility Models and Point Processes}
For the sake of generality, we will briefly discuss how to prove mixing results such as those above for the larger class of mobility models that we described in Section \ref{sec:model}. The key step is to prove an equivalent of Theorem \ref{pp_strongly_mixing} for the specific mobility model under consideration. More specifically, given that the displacement experienced by a point over $n$ time steps follows a law denoted by $\Xi_n$, one needs to prove that the point processes $\Phi_t$ and $\Phi_{t+n}$ are asymptotically independent in the limit $n \rightarrow \infty$. For example, in the case of Brownian motion, $\Xi_n$ consists of two independent Gaussian displacements in the $x-$ and $y-$directions, each with zero mean and variance $n\sigma^2$. The mixing proof for this case again uses the Dominated Convergence Theorem to establish asymptotic independence as $n\rightarrow \infty$.  \\
\textcolor{black}{Finally, we would like to note that Theorem \ref{queue_stability} can hold for a larger class of point processes than the Poisson point process that we consider in this paper. Indeed, it will hold for any initial point process $\Phi_0$ and any mobility model such that $\Phi_t$ is ergodic. We make the Poisson assumption and the assumptions on the mobility models because they allow us to analytically prove strong mixing and ergodicity, but the results need not be restricted to these cases alone.}
	\section{Mobility Alleviates Queue Workloads}
	\label{sec:mobility_orders_workloads}
	Now that we are assured of the queue's stability under appropriate conditions on the input data rate, we will investigate the effect of increasing degrees of mobility on the statistics of the queue workload. To begin, we define useful partial orderings.
	\begin{definition}
		\label{cx_def}
		\textbf{Convex order }($\leq_{cx}$): Consider two random variables $X$ and $Y$. $X$ is said to be smaller than $Y$ in the convex order, denoted by $X \leq_{cx} Y$, if $\mathbb{E}[f(X)] \leq \mathbb{E}[f(Y)]$ for all convex functions $f$, provided expectations exist.
	\end{definition}
	\begin{definition}
		\label{icx_def}
		\textbf{Increasing convex order }($\leq_{icx}$): Consider two random variables $X$ and $Y$. $X$ is said to be smaller than $Y$ in the increasing convex order, denoted by $X \leq_{icx} Y$, if $\mathbb{E}[f(X)] \leq \mathbb{E}[f(Y)]$ for all increasing convex functions $f$, provided expectations exist.
	\end{definition}
	For a further treatment on convex orderings, see \cite{gupta2010convex}. A useful equivalent definition for these orderings is presented in the lemma below (taken from \cite{baccelli2013elements}, Chapter 4).
	\begin{lemma}
		\label{eq_def}
		Given two random variables $X$ and $Y$, $X \leq_{cx} (\text{resp. } \leq_{icx}) \text{ } Y$ if and only if there exist two random variables $A$ and $B$, identically distributed as $X$ and $Y$ respectively and defined on a common probability space $(\Omega, \mathcal{F}, \mathbb{P})$, such that $A = (\text{resp. }\leq)\text{ } \mathbb{E}[B|\mathcal{G}]$ a.s., for some sub $\sigma$-field $\mathcal{G}$  of $\mathcal{F}$.
	\end{lemma}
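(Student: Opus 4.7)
The statement has two directions. The ``if'' direction is an immediate consequence of conditional Jensen's inequality, while the ``only if'' direction is the classical Strassen coupling theorem, which is where the real work lies.

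For the ``if'' direction in the convex case, I would start from $A \stackrel{d}{=} X$, $B \stackrel{d}{=} Y$ on $(\Omega, \mathcal{F}, \mathbb{P})$ with $A = \mathbb{E}[B \mid \mathcal{G}]$ a.s. For any convex $f$ for which the expectations exist, conditional Jensen gives $f(A) = f(\mathbb{E}[B \mid \mathcal{G}]) \leq \mathbb{E}[f(B) \mid \mathcal{G}]$ a.s.; taking expectations and using equality in distribution yields $\mathbb{E}[f(X)] = \mathbb{E}[f(A)] \leq \mathbb{E}[f(B)] = \mathbb{E}[f(Y)]$, i.e., $X \leq_{cx} Y$. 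In the increasing convex case one only has $A \leq \mathbb{E}[B \mid \mathcal{G}]$, so I would first invoke the monotonicity of $f$ to pass to $f(A) \leq f(\mathbb{E}[B \mid \mathcal{G}])$ and then apply Jensen as before.

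For the ``only if'' direction, the plan is to build a regular conditional probability kernel $K$ on $\mathbb{R} \times \mathcal{B}(\mathbb{R})$ such that (i) its $\mu_X$-mixture is $\mu_Y$, i.e., $\int K(x, \cdot)\, \mu_X(dx) = \mu_Y$, and (ii) for $\mu_X$-almost every $x$, $\int y\, K(x, dy) = x$ in the convex case, or $\geq x$ in the increasing convex case. Given such a kernel, one takes $A \sim \mu_X$, draws $B$ conditionally on $A$ from $K(A, \cdot)$, and sets $\mathcal{G} = \sigma(A)$: condition (ii) becomes $A = \mathbb{E}[B \mid \mathcal{G}]$ (resp.\ $A \leq \mathbb{E}[B \mid \mathcal{G}]$) by construction, while (i) ensures $B \stackrel{d}{=} Y$.

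The hard part is the existence of this kernel, which is the content of Strassen's theorem. In the one-dimensional setting considered here, the most tractable route is via the equivalent analytic characterization: $X \leq_{cx} Y$ iff $\mathbb{E}[X] = \mathbb{E}[Y]$ and $\mathbb{E}[(X-c)^+] \leq \mathbb{E}[(Y-c)^+]$ for all $c \in \mathbb{R}$, while $X \leq_{icx} Y$ iff only the latter family of inequalities holds. These are obtained by testing Definitions \ref{cx_def} and \ref{icx_def} against the convex functions $\pm x$ and $(x-c)^+$. Lifting this one-parameter family of inequalities on call-function means into a bona fide martingale (resp.\ supermartingale) kernel is the Hardy--Littlewood--P\'olya / Strassen step, which can be carried out either by a Hahn--Banach duality argument or, more concretely, by an explicit quantile-based mass-transport construction. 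Since Lemma \ref{eq_def} is cited here from \cite{baccelli2013elements}, I would invoke this classical result rather than reproduce the construction in detail.
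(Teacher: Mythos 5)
Your proposal is correct: the conditional-Jensen argument for the ``if'' direction (with the extra monotonicity step in the $\leq_{icx}$ case) is exactly right, and the ``only if'' direction is indeed Strassen's theorem, which you rightly identify as the hard part and defer to the classical literature. The paper itself states Lemma \ref{eq_def} without proof, citing \cite{baccelli2013elements}, so your sketch is consistent with --- and more detailed than --- what the paper provides.
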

	Now, we will consider 3 environments (an environment is a queue and its moving interferers). These environments differ only in the velocities of the interferers and are coupled otherwise - they have the same arrival process, $A^{(1)}(n) = A^{(2)}(n) = A^{(3)}(n) = A(n)$. Let the velocities be $v_1$, $v_2$ and $v_3$ respectively, with $v_2 = mv_1$ and $v_3 = nv_1$ and $n>m$ ($n, m \in \mathbb{N}$, $n>1$, $m>1$), so that $v_3 > v_2 > v_1$. Let $s^{(i)}(t)$ be the instantaneous service rate that the queue can provide in Environment $i$. Let the cumulative service that the queue can provide in the time interval $(t_1, t_2]$ for Environments 1, 2 and 3 be $S^{(1)}(t_1, t_2)$, $S^{(2)}(t_1, t_2)$ and $S^{(3)}(t_1, t_2)$ respectively. Let the workloads of the queue at time slot $n$ be $W^{(1)}_n$, $W^{(2)}_n$ and $W^{(3)}_n$ respectively.\\
	The key observation now is that an increase in the velocity of interferers is equivalent to an acceleration of time while keeping velocity fixed. Consider a simple example - the distance an interferer moving with velocity $2v$ will cover in time $\Delta t$ is the same as the distance an interferer moving with velocity $v$ will cover in time $2\Delta t$. In the context of the environments described above, scaling velocities by a factor of $m \in \mathbb{N}$ is equivalent to accelerating time by a factor of $m$. Hence, we have
	\begin{align*}
	&S^{(2)}(t_1, t_2) = \int_{t_1}^{t_2} s^{(1)}(mt) dt = \frac{1}{m}\int_{mt_1}^{mt_2} s^{(1)}(t) dt\\ 
	&= \frac{1}{m}\sum_{i=1}^{m} \int_{mt_1 + (i-1)(t_2-t_1)}^{mt_1 + (i)(t_2-t_1)} s^{(1)}(t) dt\\
	&= \frac{1}{m}\sum_{i=1}^{m} S^{(1)}(mt_1 + (i-1)(t_2-t_1), mt_1+i(t_2-t_1))\\
	&= \frac{1}{m}\sum_{i=1}^m Y^{(2)}_i,
	\end{align*}
	where the variables $Y^{(2)}_i = S^{(1)}(mt_1 + (i-1)(t_2-t_1), mt_1+i(t_2-t_1))$ are identically distributed (since $s(t)$ is stationary) but not independent. Similarly, we have
	\begin{align*}
	S^{(3)}(t_1, t_2) = \frac{1}{n} \sum_{j=1}^n Y^{(3)}_j,
	\end{align*}
	for $Y^{(3)}_j=S^{(1)}(nt_1 + (i-1)(t_2-t_1), nt_1+(i)(t_2-t_1))$, where \{$Y^{(3)}_j$\} is distributed identically to the variables \{$Y^{(2)}_i$\}. Note also that $S^{(1)}(t_1, t_2) = Y^{(1)}_j$ is identically distributed to \{$Y^{(2)}_j$\} and \{$Y^{(3)}_j$\}. Henceforth, we will drop the superscript for the variables $Y_i$, and write
	\begin{align}
	\label{sample_mean_interpretation}
	S^{(1)}(t_1, t_2) &=Y_i, 
	S^{(2)}(t_1, t_2) = \frac{1}{m} \sum_{i=1}^m Y_i, \nonumber \\
	S^{(3)}(t_1, t_2) &= \frac{1}{n} \sum_{i=1}^n Y_i. 
	\end{align}
	Now, let $M^{(i)}_d \in \mathbb{R}^d$ be a vector whose $j^{th}$ element is $S^{(i)}(j\delta t, (j+1)\delta t)$, for $d>j\geq0$. Then, we have the following lemma.
	\begin{lemma}
		\label{cvx_ordering_service}
		$M^{(3)}_d \leq_{cx} M^{(2)}_d \leq_{cx} M^{(1)}_d$ for all $d>0$.
	\end{lemma}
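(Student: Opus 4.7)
The plan is to lift the single-slot identity in equation~(\ref{sample_mean_interpretation}) to the whole vector and then apply a multivariate Jensen inequality. Concretely, for every $j\in\{0,1,\ldots,d-1\}$, equation~(\ref{sample_mean_interpretation}) gives
\begin{equation*}
M^{(2)}_{d,j}=S^{(2)}(j\delta t,(j+1)\delta t)=\frac{1}{m}\sum_{i=1}^{m}Y_{i,j},\qquad Y_{i,j}=S^{(1)}\!\left(jm\delta t+(i-1)\delta t,\,jm\delta t+i\delta t\right),
\end{equation*}
with each $Y_{i,j}$ distributed identically to $M^{(1)}_{d,j}$ by the stationarity of $s^{(1)}$. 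Stacking across $j$, I would define the random vectors $V^{(i)}\in\mathbb{R}^{d}$ by $V^{(i)}_{j}=Y_{i,j}$, so that $M^{(2)}_{d}=\frac{1}{m}\sum_{i=1}^{m}V^{(i)}$.

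Next, invoking the stationarity of the service process $s^{(1)}(t)$ (which itself rests on the stationarity of $\Phi_{t}$ under the chosen mobility model and of the fading fields), I would argue that the $m$ vectors $V^{(1)},\ldots,V^{(m)}$ are identically distributed and share the joint law of $M^{(1)}_{d}$. Granted this, the desired inequality follows from a multivariate Jensen: for any convex $f:\mathbb{R}^{d}\to\mathbb{R}$,
\begin{equation*}
\mathbb{E}[f(M^{(2)}_{d})]=\mathbb{E}\!\left[f\!\left(\frac{1}{m}\sum_{i=1}^{m}V^{(i)}\right)\right]\leq\frac{1}{m}\sum_{i=1}^{m}\mathbb{E}[f(V^{(i)})]=\mathbb{E}[f(M^{(1)}_{d})],
\end{equation*}
which is precisely $M^{(2)}_{d}\leq_{cx}M^{(1)}_{d}$. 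The inequality $M^{(3)}_{d}\leq_{cx}M^{(2)}_{d}$ follows by the same argument with $n/m$ replacing $m$, viewing Environment~3 as Environment~2 further accelerated by that factor; applying the argument with $n$ in place of $m$ gives $M^{(3)}_{d}\leq_{cx}M^{(1)}_{d}$ in parallel, and the chain composes.

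The main obstacle, and the point demanding the most care, is the joint-distribution identification in the second paragraph above. Stationarity of $s^{(1)}$ directly yields only the equality of the per-coordinate marginal distributions; the vectors $V^{(i)}$ consist of slot integrals of $s^{(1)}$ over intervals spaced by $m\delta t$, whereas $M^{(1)}_{d}$ is built from consecutive integrals spaced by $\delta t$. A clean way to close the gap is to invoke Lemma~\ref{eq_def} directly: construct a martingale-type coupling that realizes $M^{(2)}_{d}$ as the conditional expectation of a copy of $M^{(1)}_{d}$ with respect to a suitable sub-$\sigma$-field, e.g.\ the one generated by the block-sums of an underlying consecutive sample of length $dm$. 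An alternative route is to read $\leq_{cx}$ on random vectors in the coordinate-wise sense, in which case univariate Jensen per coordinate already suffices, and the resulting ordering is still strong enough to be propagated through the Lindley recursion~(\ref{lindley_recursion}) at the level of partial sums in the subsequent queue-comparison theorem.
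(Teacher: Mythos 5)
Your central step has a genuine gap --- the one you flag yourself in the last paragraph but do not close. The Jensen computation $\mathbb{E}\bigl[f\bigl(\tfrac{1}{m}\sum_i V^{(i)}\bigr)\bigr]\leq\tfrac{1}{m}\sum_i\mathbb{E}[f(V^{(i)})]$ only yields $M^{(2)}_d\leq_{cx}M^{(1)}_d$ if each $V^{(i)}$ has the same \emph{joint} law as $M^{(1)}_d$, and it does not: the coordinates of $V^{(i)}$ are integrals of $s^{(1)}$ over intervals whose left endpoints are spaced $m\delta t$ apart, while the coordinates of $M^{(1)}_d$ come from consecutive intervals spaced $\delta t$ apart. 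Stationarity equates the one-dimensional marginals, but the entire point of the model is that $s^{(1)}$ is temporally correlated, so the joint laws differ and the final equality $\tfrac{1}{m}\sum_i\mathbb{E}[f(V^{(i)})]=\mathbb{E}[f(M^{(1)}_d)]$ fails. Neither proposed repair rescues this. Reading $\leq_{cx}$ coordinate-wise weakens the conclusion below what Theorem~\ref{queue_order_thm} needs: Lemma~\ref{cvx_queue} requires the multivariate convex order on the whole driving vector $(y_0,\beta_0,\beta_1,\dots)$, because $W_n$ in the Lindley recursion~(\ref{lindley_recursion}) is a convex function of all increments jointly, and marginal orderings of the coordinates do not imply the joint one. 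And the ``martingale-type coupling'' you gesture at is precisely where the proof lives: the paper does not use Jensen at all, but shows that $\mathbb{E}[Z_i\mid Z_1+\dots+Z_k]$ does not depend on $i$ (the blocks play symmetric roles given their sum), hence equals the running average $\overline{X}_k$, whence $\mathbb{E}[\overline{X}_{k-1}\mid\overline{X}_k]=\overline{X}_k$ and Lemma~\ref{eq_def} gives $\overline{X}_k\leq_{cx}\overline{X}_{k-1}$. Asserting that ``a suitable sub-$\sigma$-field'' exists is not a substitute for exhibiting it and verifying that conditional-expectation identity, which is the entire content of the lemma.

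A second, smaller defect: deriving $M^{(3)}_d\leq_{cx}M^{(2)}_d$ ``with $n/m$ replacing $m$'' requires $n/m$ to be an integer, which is not assumed ($n>m$ are arbitrary integers greater than one). The paper's route avoids this entirely: once the running averages $\overline{X}_k$ are shown to decrease in $\leq_{cx}$ at every step $k\to k+1$, transitivity handles any pair $m<n$.
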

	\begin{pf}
		From (\ref{sample_mean_interpretation}), we see that we can write  	
		$M^{(1)}_d =Z_i, M^{(2)}_d = \frac{1}{m} \sum_{i=1}^m Z_i, \text{and } M^{(3)}_d = \frac{1}{n} \sum_{i=1}^n Z_i, $ where $Z_i$ is an $d-$dimensional vector, with each element of the vector distributed identically to $Y_0$ (but not necessarily independent of the other elements in the vector). Further define ${X}_n = \sum_{i=1}^n Z_i$ and $\overline{X}_n = \frac{X_n}{n} \in \mathbb{R}^d$. We will now show that $\overline{X}_{n-1} \geq_{cx} \overline{X}_{n}$. The result will then follow by the transitivity of the convex ordering.\\ First, note that $\mathbb{E}[Z_i|X_n]$ is a function of $X_n$, and is independent of $i$ due to the stationarity of $\{Z_i\}$. Let $\mathbb{E}[Z_i|X_n] = \Gamma(X_n)$. Then,
		\begin{align*}
		\Gamma(X_n) &= \frac{1}{n}\sum_{i=1}^n \mathbb{E}[Z_i|X_n] = \mathbb{E}\left[\frac{X_n}{n}|X_n\right]= \overline{X}_n.
		\end{align*}
		Since $\mathbb{E}[Z_i|X_n] = \overline{X}_n$, we have that 
		\begin{align*}
		\mathbb{E}[Z_1 + Z_2 + ... +Z_{n-1}| X_n] &= (n-1)\overline{X}_n,
		\end{align*}
		{which implies } $\mathbb{E}[\overline{X}_{n-1}|\overline{X}_n]= \overline{X}_n$.
		From Lemma \ref{eq_def}, we have that $\overline{X}_{n-1} \geq_{cx} \overline{X}_{n}$.
	\end{pf}
	The intuition behind Lemma \ref{cvx_ordering_service} is straightforward - sample means calculated using a larger number of samples will possess lower variability, resulting in a convex ordering. Hence, an accelerated service process will result in the queue "seeing" a larger number of samples of instantaneous service rate, which in turn will achieve an averaging effect that results in more reliable service being provided to the queue. It follows that this increasing reliability of service will reflect in the performance of the queue, a qualitative result that we will present in Theorem \ref{queue_order_thm}.\\
	Now, consider two sequences of variables $\{\widetilde{y}_i\}$ and $\{{y_i}\}$, $i\geq0$, whose evolution is governed by the same stochastic recurrence function $h$, and two sets of driving sequences $\{\widetilde{\beta}_n\}$ \& $\{\beta_n\}$ and initial conditions $\widetilde{y}_0$ \& $y_0$ respectively:
	\begin{align*}
	y_{n+1} &= h(y_n, \beta_n)\\
	\widetilde{y}_{n+1} &= h(\widetilde{y}_n, \widetilde{\beta}_n).
	\end{align*}
	These sequences have the same dynamics, characterized by $h$, and differ only in their driving sequences and initial conditions. Then, we have the following lemma:
	\begin{lemma}
		\label{cvx_queue}
		(\cite{baccelli2013elements}, Property 4.2.5): Assume that the driving sequences and initial conditions are integrable, and that the function $y \rightarrow h(y, \beta)$ is non-decreasing and that the function $(y, \beta) \rightarrow h(y, \beta)$ is convex. Then, $(y_0, \beta_0, \beta_1, ...) \leq_{cx} (\widetilde{y}_0, \widetilde{\beta}_0, \widetilde{\beta}_1, ...)$ implies $(y_0, y_1, y_2, ...) \leq_{icx} (\widetilde{y}_0, \widetilde{y}_1, \widetilde{y}_2, ...)$.
	\end{lemma}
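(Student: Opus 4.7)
The plan is to prove the lemma by a coupling argument rooted in the Strassen-type characterization of the convex order provided by Lemma \ref{eq_def}. First, I would invoke the vector-valued version of Lemma \ref{eq_def} to construct, on a common probability space $(\Omega, \mathcal{F}, \mathbb{P})$, coupled versions $(y'_0, \beta'_0, \beta'_1, \ldots)$ and $(\widetilde{y}'_0, \widetilde{\beta}'_0, \widetilde{\beta}'_1, \ldots)$ that are equal in distribution to the originals, together with a sub $\sigma$-field $\mathcal{G} \subset \mathcal{F}$ such that $y'_0 = \mathbb{E}[\widetilde{y}'_0 \mid \mathcal{G}]$ and $\beta'_n = \mathbb{E}[\widetilde{\beta}'_n \mid \mathcal{G}]$ a.s.\ for every $n \geq 0$. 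Defining iterates pathwise by $y'_{n+1} := h(y'_n, \beta'_n)$ and $\widetilde{y}'_{n+1} := h(\widetilde{y}'_n, \widetilde{\beta}'_n)$ preserves the distributions of the original recursion, so it suffices to work with the coupled sequences.

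The core step is an induction on $n$ establishing the pathwise bound $y'_n \leq \mathbb{E}[\widetilde{y}'_n \mid \mathcal{G}]$ a.s. The base case holds with equality by construction. For the inductive step, joint convexity of $h$ together with the conditional Jensen inequality yields
\begin{equation*}
h\bigl(\mathbb{E}[\widetilde{y}'_n \mid \mathcal{G}],\, \mathbb{E}[\widetilde{\beta}'_n \mid \mathcal{G}]\bigr) \leq \mathbb{E}\bigl[h(\widetilde{y}'_n, \widetilde{\beta}'_n) \bigm| \mathcal{G}\bigr] = \mathbb{E}[\widetilde{y}'_{n+1} \mid \mathcal{G}].
\end{equation*}
Substituting $\beta'_n = \mathbb{E}[\widetilde{\beta}'_n \mid \mathcal{G}]$ on the left and applying monotonicity of $y \mapsto h(y,\beta)$ with the inductive hypothesis gives $y'_{n+1} = h(y'_n, \beta'_n) \leq h(\mathbb{E}[\widetilde{y}'_n \mid \mathcal{G}], \beta'_n) \leq \mathbb{E}[\widetilde{y}'_{n+1} \mid \mathcal{G}]$, which closes the induction.

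To upgrade this componentwise bound to the desired sequence-level $\leq_{icx}$ ordering, I would take an arbitrary increasing convex test function $f$ of finitely many coordinates. Coordinatewise monotonicity of $f$ together with the pathwise bounds $y'_k \leq \mathbb{E}[\widetilde{y}'_k \mid \mathcal{G}]$ yields $f(y'_0, y'_1, \ldots) \leq f(\mathbb{E}[\widetilde{y}'_0 \mid \mathcal{G}], \mathbb{E}[\widetilde{y}'_1 \mid \mathcal{G}], \ldots)$, and a second application of the conditional Jensen inequality to the convex $f$ bounds the right-hand side above by $\mathbb{E}[f(\widetilde{y}'_0, \widetilde{y}'_1, \ldots) \mid \mathcal{G}]$. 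Taking unconditional expectations delivers $\mathbb{E}[f(y_0, y_1, \ldots)] \leq \mathbb{E}[f(\widetilde{y}_0, \widetilde{y}_1, \ldots)]$, which is the claimed ordering.

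The principal obstacle I anticipate is the bookkeeping around the joint coupling: the argument genuinely requires a single sub $\sigma$-field $\mathcal{G}$ witnessing the conditional-mean representation for the entire vector $(y_0, \beta_0, \beta_1, \ldots)$ simultaneously, not merely coordinatewise, so one must appeal to the multivariate Strassen-type form of Lemma \ref{eq_def} rather than the scalar version. A secondary technical point is ensuring integrability of the iterates $y'_n$ and $\widetilde{y}'_n$ so that each conditional Jensen step is valid; this propagates from the integrability of the driving sequences and initial conditions through the recursion, provided the growth of $h$ is controlled by its convexity and the integrability hypothesis in the statement.
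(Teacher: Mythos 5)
Your argument is correct. The paper does not prove this lemma --- it is quoted verbatim from the cited reference (Property~4.2.5 of \cite{baccelli2013elements}) --- and your proof is essentially the standard one given there: the multivariate Strassen representation of $\leq_{cx}$ (the vector form of Lemma~\ref{eq_def}), an induction propagating the pathwise bound $y'_n \leq \mathbb{E}[\widetilde{y}'_n \mid \mathcal{G}]$ via monotonicity in $y$ and conditional Jensen for the jointly convex $h$, and a final Jensen step against increasing convex test functions of finitely many coordinates. Two technical points you flag are handled routinely: since $\leq_{cx}$ and $\leq_{icx}$ for sequences are defined through finite-dimensional marginals, one may apply the finite-dimensional Strassen coupling to each truncation $(y_0,\beta_0,\dots,\beta_{n-1})$ rather than couple the infinite sequence at once; and integrability of the iterates, which is not automatic for general convex $h$ (a convex image of an integrable variable need not be integrable), does hold in the paper's application because the Lindley map $h(y,\beta)=(y+\beta)^+$ is $1$-Lipschitz.
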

	We now present the main theorem of this section.
	\begin{theorem}
		\label{queue_order_thm}
		For queues that start with initial workload $W^i_0=0$ and evolve according to the Lindley recursion (\ref{lindley_recursion}), and for environments as defined previously, 
		\begin{align}
		\label{queue_workload_icx}
		W^{(3)}_n \leq_{icx} W^{(2)}_n \leq_{icx} W^{(1)}_n, \forall{}n.
		\end{align}
		Further, the steady state workloads that the queues converge to in the limit are also similarly ordered:
		\begin{align}
		W^{(3)}_\infty \leq_{icx} W^{(2)}_\infty \leq_{icx} W^{(1)}_\infty.
		\end{align}
	\end{theorem}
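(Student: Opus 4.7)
The plan is to apply Lemma \ref{cvx_queue} to the Lindley recursion (\ref{lindley_recursion}), with Lemma \ref{cvx_ordering_service} supplying the required ordering of the driving sequences. First I would identify the stochastic recurrence: writing $\beta_n^{(i)} = (A(n), V^{(i)}(n))$ and $h(w, (a, v)) = (w + a - v)^+$, the Lindley recursion reads $W_{n+1}^{(i)} = h(W_n^{(i)}, \beta_n^{(i)})$. The map $w \mapsto h(w, (a,v))$ is non-decreasing, and $(w, a, v) \mapsto (w + a - v)^+$ is convex as the pointwise maximum of two affine functions, so the hypotheses of Lemma \ref{cvx_queue} are met.

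Next I would establish the convex ordering of the driving sequences $(W_0^{(i)}, \beta_0^{(i)}, \beta_1^{(i)}, \ldots)$. Since $W_0^{(i)} = 0$ for all $i$ and the arrival process $A(\cdot)$ is coupled across environments, this reduces to the convex ordering of the joint service vectors $(V^{(i)}(0), V^{(i)}(1), \ldots, V^{(i)}(d-1))$ for every $d$. By the standing assumption that $A(\cdot)$ is independent of the PPP and interferer motions, $V^{(i)}(\cdot)$ is independent of $A(\cdot)$ in each environment. Hence, for any convex test function $\varphi$ of the joint driving sequence, conditioning on $A(\cdot)$ reduces the comparison to a convex comparison in the $V^{(i)}$ arguments alone, which is precisely Lemma \ref{cvx_ordering_service} applied to $M_d^{(i)}$ (whose $j$-th coordinate is exactly $V^{(i)}(j) = S^{(i)}(j\delta, (j+1)\delta)$). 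This yields $(W_0^{(3)}, \beta_0^{(3)}, \ldots, \beta_{d-1}^{(3)}) \leq_{cx} (W_0^{(2)}, \beta_0^{(2)}, \ldots, \beta_{d-1}^{(2)}) \leq_{cx} (W_0^{(1)}, \beta_0^{(1)}, \ldots, \beta_{d-1}^{(1)})$ for every $d$. Invoking Lemma \ref{cvx_queue} then gives the finite-horizon assertion $W_n^{(3)} \leq_{icx} W_n^{(2)} \leq_{icx} W_n^{(1)}$ for every $n$, since a convex ordering of full sequences entails the coordinatewise ordering by taking $\varphi$ to depend on a single coordinate.

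For the steady-state ordering I would use Loynes' backward construction, which realizes $W_\infty^{(i)}$ as the a.s.\ monotone limit of a sequence $\widetilde{W}_n^{(i)}$ that is equidistributed to $W_n^{(i)}$ started from $0$. Driving all three backward constructions with the same arrival realization transfers the finite-horizon ordering to $\widetilde{W}_n^{(3)} \leq_{icx} \widetilde{W}_n^{(2)} \leq_{icx} \widetilde{W}_n^{(1)}$. For any increasing convex $f$, the shifted function $g(x) = f(x) - f(0)$ is non-negative and increasing convex on $[0,\infty)$, so monotone convergence yields $\mathbb{E}[g(\widetilde{W}_n^{(i)})] \uparrow \mathbb{E}[g(W_\infty^{(i)})]$ (possibly $+\infty$), and the inequalities pass to the limit. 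Adding back $f(0)$ gives $W_\infty^{(3)} \leq_{icx} W_\infty^{(2)} \leq_{icx} W_\infty^{(1)}$.

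The main obstacle I expect is the lifting step in the second paragraph: Lemma \ref{cvx_ordering_service} only orders the service vectors, but Lemma \ref{cvx_queue} requires an ordering of the full driving sequence which also contains the arrivals. The independence of $V^{(i)}$ from $A$ and the use of a common arrival process across the three environments make this lifting valid via conditioning, but it is the one spot where careful verification, rather than a direct invocation, is needed.
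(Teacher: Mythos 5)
Your proposal is correct and follows essentially the same route as the paper: order the service vectors via Lemma \ref{cvx_ordering_service}, lift to the driving sequence using the independence of the common arrival process (the paper phrases this as preservation of $\leq_{cx}$ under negation and addition of an independent variable, you phrase it as conditioning on $A(\cdot)$ — these are the same argument), apply Lemma \ref{cvx_queue} to the Lindley recursion, and pass to the stationary limit by the Loynes backward scheme plus monotone convergence. The only cosmetic difference is that you keep $\beta_n = (A(n), V^{(i)}(n))$ as a pair and absorb the subtraction into $h$, whereas the paper sets $\beta_n = A(n) - V^{(i)}(n)$; both satisfy the hypotheses of Lemma \ref{cvx_queue}.
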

	\begin{pf}
		The convex ordering is preserved by the operation of multiplication by $-1$ and by the addition of an independent random variable. Let $A_n \in \mathbb{R}^n$ be a vector whose $j^{th}$ element is $A(j)$, for $n > j \geq 0$. From Lemma \ref{cvx_ordering_service},
		\begin{align*}
		A(n)-M^{(3)}_n \leq_{cx} A(n)-M^{(2)}_n \leq_{cx}	A(n)-M^{(1)}_n.
		\end{align*}
		Now, we define $h(y, \beta) = (y + \beta)^+$, and set $y_n$ to be queue workload $W_n$, initial conditions $y_0 = \widetilde{y}_0 = 0$ and $(\beta_0, \beta_1, ..., \beta_{n-1}) = A_n - M_n$. Then the proof of the first part of the theorem follows from Lemma \ref{cvx_queue}. Since we are operating in the regime where the queue is stable, we know that the queue workload will converge to a stationary steady state variable $W^{(i)}_\infty$. The second part of the theorem then follows from a direct application of the Monotone Convergence Theorem (see \cite{baccelli2013elements}, Section 4.2.6 for more details).
	\end{pf}
	We emphasize again here that this result holds for queues that are driven by \textit{correlated} service processes, and is a significantly more general result than results of the Pollaczek-Khinchine type, which hold only for independent service processes.\\
	A first consequence of this theorem is the following corollary.
	\begin{corollary}
		\label{mean_workload_ordering}
		$\mathbb{E}[W^{(3)}_\infty] \leq\mathbb{E}[W^{(2)}_\infty] \leq\mathbb{E}[W^{(1)}_\infty]$.
	\end{corollary}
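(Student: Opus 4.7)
The plan is to deduce this corollary as an immediate consequence of Theorem \ref{queue_order_thm} by selecting an appropriate test function in the definition of the increasing convex order. Specifically, Theorem \ref{queue_order_thm} already establishes the stronger statement $W^{(3)}_\infty \leq_{icx} W^{(2)}_\infty \leq_{icx} W^{(1)}_\infty$, so the only remaining task is to extract an ordering of the means.

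Concretely, I would observe that the identity function $f(x) = x$ is both non-decreasing and convex on $\mathbb{R}$, and hence qualifies as a test function in Definition \ref{icx_def}. Applying the definition of $\leq_{icx}$ twice, with $f(x)=x$, yields
\begin{align*}
\mathbb{E}[W^{(3)}_\infty] = \mathbb{E}[f(W^{(3)}_\infty)] &\leq \mathbb{E}[f(W^{(2)}_\infty)] = \mathbb{E}[W^{(2)}_\infty],\\
\mathbb{E}[W^{(2)}_\infty] = \mathbb{E}[f(W^{(2)}_\infty)] &\leq \mathbb{E}[f(W^{(1)}_\infty)] = \mathbb{E}[W^{(1)}_\infty],
\end{align*}
which chained together give the desired inequality. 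Implicit in the use of Definition \ref{icx_def} is the hypothesis that the relevant expectations exist; this is guaranteed because the theorem is invoked in the regime where the queues are stable, so each $W^{(i)}_\infty$ is an integrable stationary workload (Theorem \ref{queue_stability} together with the Loynes construction used in the proof of Theorem \ref{queue_order_thm}).

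There is essentially no obstacle here: the entire content of the corollary has been packed into Theorem \ref{queue_order_thm}, and the corollary is just a one-line specialization of the increasing convex order to the identity. The only subtlety worth double-checking is integrability of $W^{(i)}_\infty$, but this follows from the stability assumption $\lambda < \delta \mathbb{E}[s(0)]$ and standard Loynes theory, and it is in any case a prerequisite for the $\leq_{icx}$ statement of Theorem \ref{queue_order_thm} to be meaningful.
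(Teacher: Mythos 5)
Your proposal is correct and matches the paper's (implicit) argument: the corollary is stated as an immediate consequence of Theorem \ref{queue_order_thm}, obtained by taking the identity function as the test function in the increasing convex order. Your additional remark on integrability of $W^{(i)}_\infty$ under the stability condition is a reasonable and harmless elaboration.
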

	It also follows that the mean delay that packets see are similarly ordered. Let us assume that a packet is successfully transmitted once all the data associated with it is transmitted, and the delay a packet sees is the length of the time interval from its arrival to its succesful transmission. Recalling that we assume that each packet is of unit size, at time slot $n$, the (fractional) number of packets in the queue will be $N_\infty = W_\infty$. ${N}_\infty$ is hence ordered in expectation similar to $W_\infty$. It follows from Little's Law that mean delay, $\mathbb{E}[D] = \mathbb{E}[N_\infty]/\mathbb{E}[A(n)]$ is also ordered in the same way, provided that it exists. 
	\begin{corollary}
		\label{mean_delay_ordering}
		$\mathbb{E}[D^{(3)}] \leq\mathbb{E}[D^{(2)}] \leq\mathbb{E}[D^{(1)}]$.
	\end{corollary}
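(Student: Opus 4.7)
The plan is to reduce this corollary to the mean workload ordering already established in Corollary \ref{mean_workload_ordering} by a direct application of Little's Law, exploiting the fact that the three environments share a common arrival process.

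First, I would invoke the fact that packets are of unit size: the steady-state (fractional) number of packets in the queue satisfies $N^{(i)}_\infty = W^{(i)}_\infty$ for each $i \in \{1, 2, 3\}$. Taking expectations and combining with Corollary \ref{mean_workload_ordering} immediately gives
\begin{equation*}
\mathbb{E}[N^{(3)}_\infty] \leq \mathbb{E}[N^{(2)}_\infty] \leq \mathbb{E}[N^{(1)}_\infty].
\end{equation*}
Before invoking Little's Law, I would briefly verify its hypotheses in each environment: stability holds by Theorem \ref{queue_stability} (since $\lambda < \delta\mathbb{E}[s(0)]$ is assumed, and $\mathbb{E}[s(0)]$ is the same spatial average in all three environments because $\Phi_t$ is a homogeneous PPP of intensity $\Lambda$ at every $t$ regardless of $v$), the driving sequences are jointly stationary and ergodic as established in Section \ref{sec:mobility_guarantees_stability}, and finiteness of $\mathbb{E}[W^{(i)}_\infty]$ is needed to guarantee finiteness of $\mathbb{E}[D^{(i)}]$.

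Next, applying Little's Law to each environment, $\mathbb{E}[D^{(i)}] = \mathbb{E}[N^{(i)}_\infty]/\mathbb{E}[A(n)]$. The crucial observation is that the three environments are coupled to share the same arrival process $A^{(1)}(n) = A^{(2)}(n) = A^{(3)}(n) = A(n)$, so the denominator $\mathbb{E}[A(n)] = \lambda$ is identical across the three environments. Dividing the workload ordering by this common positive constant yields
\begin{equation*}
\mathbb{E}[D^{(3)}] \leq \mathbb{E}[D^{(2)}] \leq \mathbb{E}[D^{(1)}],
\end{equation*}
which is the desired conclusion.

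The proof is essentially a one-line consequence of Corollary \ref{mean_workload_ordering}, so there is no substantial obstacle. The only subtle point worth flagging is the implicit assumption (already noted in the statement) that the mean delay exists; this amounts to verifying the integrability required for Little's Law, which follows from the stability of each queue under the common arrival rate $\lambda$ and from the fact that the increasing-convex order in Theorem \ref{queue_order_thm} preserves finite expectations when the dominating workload $W^{(1)}_\infty$ is integrable.
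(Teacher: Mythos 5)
Your proof is correct and follows essentially the same route as the paper: identify $N_\infty = W_\infty$ via the unit-packet-size assumption, then apply Little's Law with the common arrival rate $\lambda = \mathbb{E}[A(n)]$ to transfer the mean workload ordering of Corollary \ref{mean_workload_ordering} to mean delay. Your additional remarks on verifying stability and integrability are consistent with the paper's caveat ``provided that it exists.''
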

	\subsection{Discussion}
	These results state that queues that see the same arrival process see decreasing workloads and delays as they are placed amongst interferers that move with increasing mobility. Hence, the performance of queues in such a wireless network improves as the degree of mobility in the network increases. This performance improvement can be intuitively thought of as a consequence of mobility introducing diversity into the network. The diversity that we refer to is with respect to interferer configurations - faster moving interferers allow a larger number of configurations to be seen at a higher rate. In other words, the network mixes faster. An intuitive explanation that is a dual to the faster averaging/mixing argument is the following scenario. Assume the queue is surrounded by a "bad" point process of interferers - such that a large number of interferers are located close to the origin. The interference seen at the origin will hence be large. A slowly moving environment will cause the interference at the origin to be large for a long period of time, which will result in a build-up in the workload of the queue. On the flip side, interferer configurations that are "good" will also persist for long periods of time, allowing the queue to drain effectively during those periods. A fast-moving environment, on the other hand, will result in shorter build-ups and drainings of the queue. Fig. \ref{slow_and_fast} shows the difference in the lengths of these cycles and hence in the average queue workload for environments with different degrees of mobility.
	\begin{figure}[h]
		\begin{center}
		\begin{subfigure}[h]{0.45\columnwidth}
			\includegraphics[width=\linewidth]{./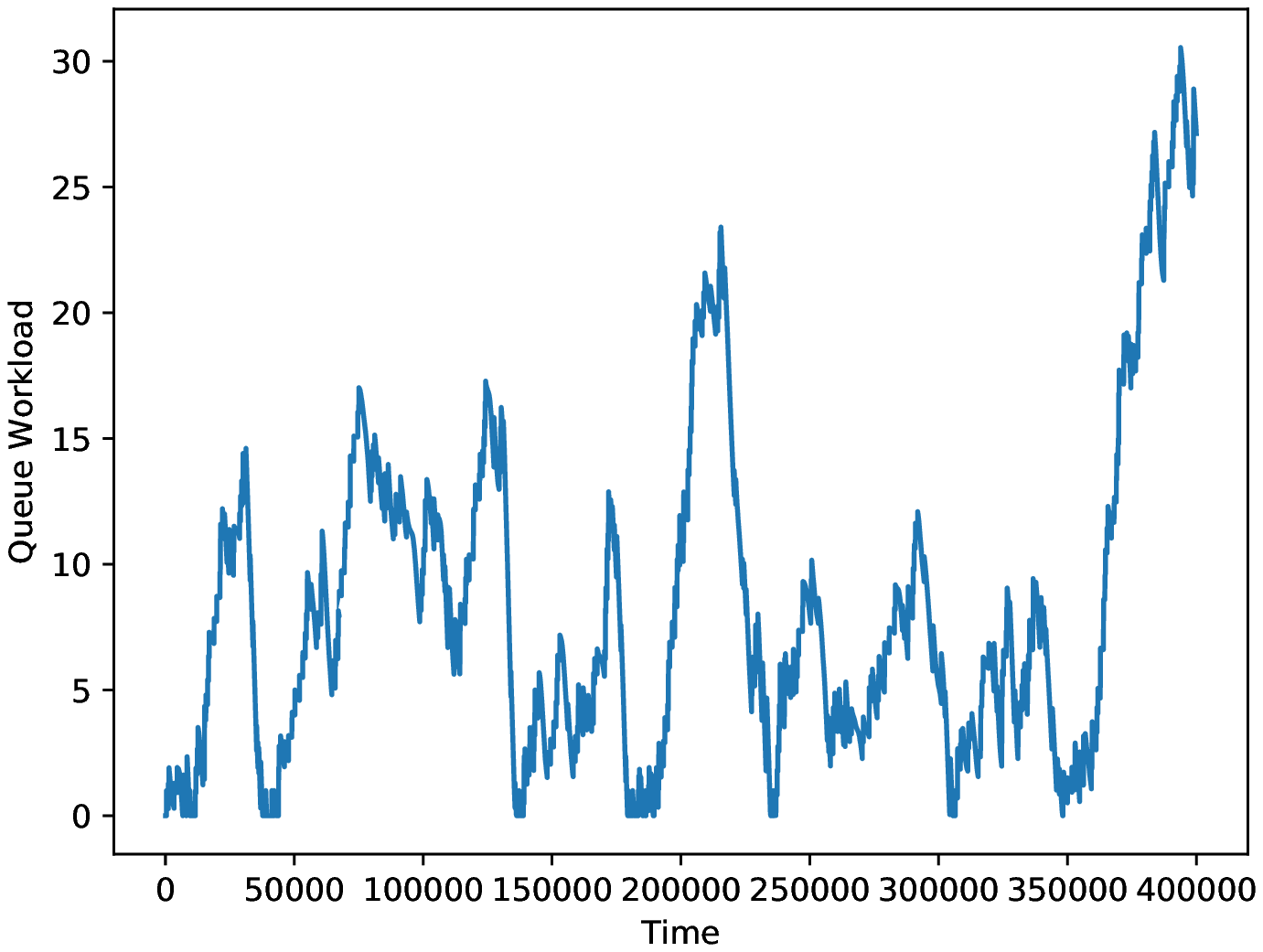}
			\caption{Slow environment\\ ($v=1$)}
			\label{fig:gull}
		\end{subfigure}
		\begin{subfigure}[h]{0.45\columnwidth}
			\includegraphics[width=\linewidth]{./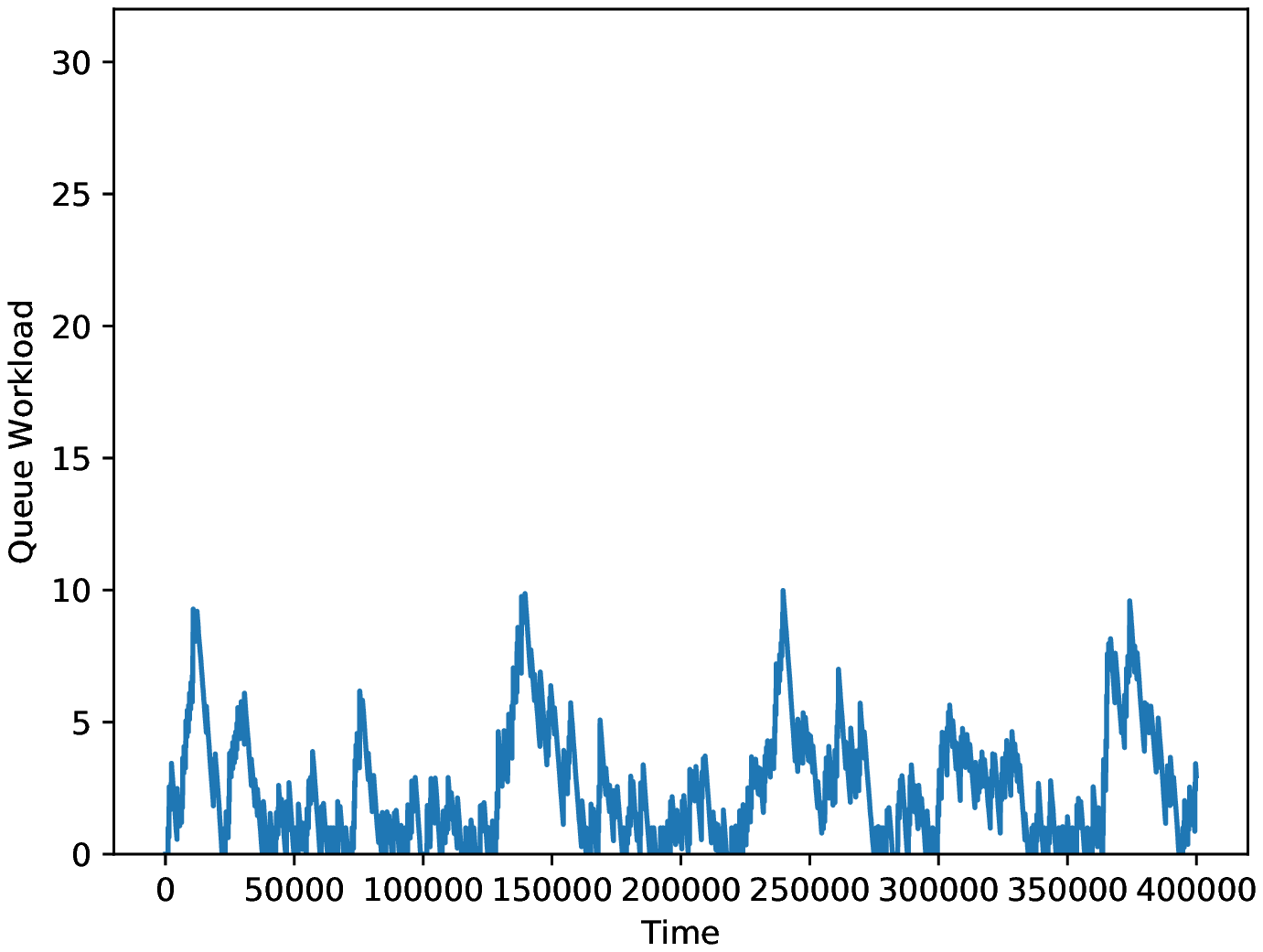}
			\caption{Fast environment\\ ($v=1000$)}
			\label{fig:gull2}
		\end{subfigure}
	\end{center}
\caption{Comparison of queue build-up-and-drain cycles for different degrees of motion in the random direction model}\label{slow_and_fast}
	\end{figure}
	
	 From an expectation point-of-view, it seems clear that longer build-up-and-drain cycles will result in larger mean workloads and mean delays, which is what we see in Corollary \ref{mean_workload_ordering} and \ref{mean_delay_ordering}. Interestingly, the more general result of Theorem \ref{queue_order_thm} also holds - increasing convex functions of workloads are also similarly ordered. \\
	We would like to draw attention to the generality of results presented in this section. First, in order to compare the workloads of two environments, the velocity of one environment must be a rational multiple of the other. Since the rationals are dense in the reals, we do not lose any generality here. In terms of modelling, note that we make no assumptions on the service process $s(t)$ save that it is integrable and a measurable function of $I_0(t)$. We only need a stationarity and mixing assumption for the fading processes. These results hold for all the mobility models described in Section \ref{sec:model}. Indeed, these results will hold for \textit{any} general mobility model that admits the interpretation of an increase in velocity as an acceleration in time\textcolor{black}{, and for \textit{any} initial point process $\Phi_0$ (not necessarily Poisson)}. The case of the receiver at the origin being mobile as opposed to static is subsumed under this set of mobility models - the relative velocities of interferers with respect to the tagged receiver will no longer be independent of each other, but will still be amenable to the acceleration-of-time framework used in this section. Models that incorporate power control/contention protocols can also be analyzed using the acceleration-of-time framework as long as the transmission powers/transmission decisions are functions of $\Phi_t$.\\
	We briefly discuss the limiting case of infinite velocity. Under this mobility assumption, the point process of interferers will be resampled at every time instant. This will decorrelate the service process - the variables $Y_i$ will now be i.i.d. and the service process seen in a time interval $(t_1, t_2)$ will be a constant - the infinite averaging will result in the expected service rate being seen. This results in a G/D/1 queue. Making further assumptions on the arrival process and network model could lead to analytical expressions for queue statistics in this infinite mobility case, but we will not pursue this here.

	\section{Correlation structures in mobile networks}
	\label{sec:correlation}
	In this section, we formalize the correlations that we discussed while presenting the intuition of build-up-and-drain cycles for the results in the previous sections. First, we consider SINR level-crossing events - $L_t$ is defined as the event that at time $t$, $\text{SINR}_0(t) > T$ for some fixed threshold $T$. We will show that if we see a level-crossing event at time $t$, there is an increased likelihood of seeing another level-crossing event in some neighbourhood of that time. Assume Rayleigh fading between the transmitter and receiver of interest (modelled by exponential random variables of mean $1/\mu$), and let all other fading variables be generally distributed but with unit mean. We then have the following theorem.
	\begin{theorem}
		\label{positive_correlation_crossing}
		$\mathbb{P}(L_{t+1} | L_t) > \mathbb{P}(L_{t+1}) = \mathbb{P}(L_t).$
	\end{theorem}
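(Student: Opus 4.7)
The equality $\mathbb{P}(L_t)=\mathbb{P}(L_{t+1})$ is immediate from the stationarity of the joint process $(\Phi_t,\{F_j^0(t)\}_{j\ge 0})$; the content is the strict inequality $\mathbb{P}(L_{t+1}\mid L_t)>\mathbb{P}(L_{t+1})$, which is equivalent to $\mathbb{P}(L_t\cap L_{t+1})>\mathbb{P}(L_t)\mathbb{P}(L_{t+1})$, i.e.\ to the positive correlation of $L_t$ and $L_{t+1}$. The first step is to use the Rayleigh assumption on the tagged link to eliminate $F_0^0$. Setting $s=\mu T/l(R)$ and conditioning on the interference and on the interferer trajectories, one has $\mathbb{P}(L_t\mid I_0(t))=e^{-s\gamma}e^{-sI_0(t)}$. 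Invoking the independence of $F_0^0(t)$ and $F_0^0(t+1)$ (which holds as soon as the slot length exceeds the coherence time of the tagged link), I get
\begin{align*}
\mathbb{P}(L_t)&=e^{-s\gamma}\,\mathbb{E}[e^{-sI_0(t)}],\\
\mathbb{P}(L_t\cap L_{t+1})&=e^{-2s\gamma}\,\mathbb{E}[e^{-sI_0(t)-sI_0(t+1)}],
\end{align*}
so that it suffices to prove $\mathbb{E}[e^{-sI_0(t)-sI_0(t+1)}]>\mathbb{E}[e^{-sI_0(t)}]\,\mathbb{E}[e^{-sI_0(t+1)}]$.

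The second step is to marginalize the interferer fades (independent across interferers by assumption and independent across the two slots by the coherence-time assumption) and recast both sides in terms of the Laplace functional of the joint point process $(\Phi_t,\Phi_{t+1})$. Letting $\mathcal{L}_F(z)=\mathbb{E}[e^{-zF}]$ denote the Laplace transform of a generic interferer fade and defining $\tilde f(x)=\tilde g(x)=-\log\mathcal{L}_F(s\,l(|x|))$, one obtains
\begin{align*}
\mathbb{E}[e^{-sI_0(t)}]=\mathcal{L}_{\Phi_t}(\tilde f),\qquad
\mathbb{E}[e^{-sI_0(t)-sI_0(t+1)}]=\mathcal{L}_{\Phi_t,\Phi_{t+1}}(\tilde f,\tilde g).
\end{align*}
The problem therefore reduces to showing $\mathcal{L}_{\Phi_t,\Phi_{t+1}}(\tilde f,\tilde g)>\mathcal{L}_{\Phi_t}(\tilde f)\,\mathcal{L}_{\Phi_{t+1}}(\tilde g)$, for which the closed-form joint Laplace functional already derived in the proof of Theorem \ref{pp_strongly_mixing} is immediately available.

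The third step is a short algebraic manipulation. Substituting the expression from the proof of Theorem \ref{pp_strongly_mixing}, using $\int p(x,dy)=1$ and the displacement-theorem identity $\Lambda(dy)=\int p(x,dy)\Lambda(dx)$ (valid because $\Phi_{t+1}$ is a homogeneous PPP with intensity $\Lambda$), the log-ratio collapses to
\begin{align*}
\log\frac{\mathcal{L}_{\Phi_t,\Phi_{t+1}}(\tilde f,\tilde g)}{\mathcal{L}_{\Phi_t}(\tilde f)\,\mathcal{L}_{\Phi_{t+1}}(\tilde g)}
=\int_{\mathbb{R}^2}\bigl(1-e^{-\tilde f(x)}\bigr)\!\left(1-\int_{\mathbb{R}^2}e^{-\tilde g(y)}\,p(x,dy)\right)\Lambda(dx).
\end{align*}
Both factors inside the integral are non-negative because $\tilde f,\tilde g\ge 0$, and each is strictly positive on $\{x:l(|x|)>0\}$, a set of infinite $\Lambda$-measure; hence the integral is strictly positive and the inequality is strict. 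I expect the main obstacle to be purely organizational: sequencing the three reductions — Rayleigh on the tagged link, marginalization of interferer fades, reuse of the joint Laplace functional of $(\Phi_t,\Phi_{t+1})$ — so that the positive correlation between $L_t$ and $L_{t+1}$ isolates cleanly into the explicit non-negative integrand above, rather than having to invoke an abstract FKG-type argument on the underlying trajectories.
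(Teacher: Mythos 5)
Your proposal is correct and follows essentially the same route as the paper's appendix proof: Rayleigh fading on the tagged link converts the joint level-crossing probability into the joint Laplace transform of $(I_0(t),I_0(t+1))$, the interferer fades are marginalized into a Laplace functional of $(\Phi_t,\Phi_{t+1})$ evaluated via the displacement kernel $p$, and the log-ratio collapses to the same manifestly positive integrand $\Lambda\int[1-\mathcal{L}_h(sl(x))][1-\int\mathcal{L}_h(sl(y))p(x,dy)]dx$. The only cosmetic differences are that you explicitly reuse the joint Laplace functional from Theorem \ref{pp_strongly_mixing} rather than rederiving it, and you carry the noise term $e^{-s\gamma}$ (which cancels) explicitly.
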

	\begin{proof}
		Presented in the appendix.
	\end{proof}
	Theorem \ref{positive_correlation_crossing} formalizes the intuition we presented in Section \ref{sec:mobility_orders_workloads}. It implies that if we observe a high (resp. low) SINR, we are likely to continue to observe a high (resp. low) SINR for a while. \textcolor{black}{The notion of build-up-and-drain cycles of the queue is a direct consequence of this effect - the larger the positive correlations in the system, the longer the build-up-and-drain cycles and the larger the mean queue length and delay. This positive temporal correlation of level-crossing events will decrease as the correlation between the point processes $\Phi_t$ and $\Phi_{t+1}$ decreases - or in other words, as mobility increases - resulting in the stochastic orderings we established in Section \ref{sec:mobility_orders_workloads}.}\\
	Finally, for completeness, we will derive the correlation coefficient between the interference shot-noise observed at times $t$ and $t+1$, $\gamma(t, t+1)$, defined as:
	\begin{equation*}
	\gamma(t, t+1) = \frac{\mathbb{E}[I_0(t)I_0(t+1)] - \mathbb{E}[I_0(t)]^2}{\mathbb{E}[I_0(t)^2] - \mathbb{E}[I_0(t)]^2}.
	\end{equation*}
We will use $\mathcal{L}_{t, t+1}(.,.)$ to denote the joint Laplace transform of $I(t)$ and $I(t+1)$.
\begin{lemma}
	\label{positive_correlation_coefficient}
		\begin{equation}
	\label{corr_coeff}
	\gamma(t, t+1) = \frac{\int l(x) \int l(y) p(x, dy) dx}{\mathbb{E}[h^2] \int l(x)^2 dx}.
	\end{equation}
\end{lemma}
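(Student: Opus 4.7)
The plan is to evaluate the numerator and denominator of $\gamma(t,t+1)$ separately by directly computing first and second moments of the shot-noise via Campbell's theorem and the second factorial moment measure of the homogeneous PPP, using the representation $\Phi_{t+1} = \{x_j + s_j : x_j \in \Phi_t\}$, where $\{s_j\}$ are i.i.d.\ displacements drawn from the motion kernel $p(x,dy)$ (independent of $\Phi_t$ by the assumptions on the mobility model). Under the unit-mean fading convention announced just before the lemma, $\mathbb{E}[F_j^0(t)] = 1$, so $\mathbb{E}[h]=1$ and the factor $\Lambda$ from Campbell's formulas will cancel between numerator and denominator.

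First I would handle the denominator. By Campbell's first-moment formula, $\mathbb{E}[I_0(t)] = \Lambda \int l(|x|)\,dx$. By the standard second-moment expansion for a Poisson shot noise (splitting $\mathbb{E}[I_0(t)^2]$ into the diagonal contribution, controlled by the intensity measure, and the off-diagonal contribution, controlled by the square of the intensity), one obtains
\begin{equation*}
\mathbb{E}[I_0(t)^2] - \mathbb{E}[I_0(t)]^2 = \Lambda\,\mathbb{E}[h^2] \int l(|x|)^2\,dx,
\end{equation*}
which is exactly $\Lambda$ times the denominator of the claimed expression.

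Next I would compute the covariance $\mathrm{Cov}(I_0(t), I_0(t+1))$. Writing the product
\begin{equation*}
I_0(t) I_0(t+1) = \sum_{j} l(|x_j|)\, l(|x_j + s_j|)\, F_j^0(t)\, F_j^0(t+1) + \sum_{j \neq k} l(|x_j|)\, l(|x_k + s_k|)\, F_j^0(t)\, F_k^0(t+1),
\end{equation*}
I would take expectations term by term. For the off-diagonal sum, the independence of the fadings across interferers, the independence of $s_k$ from $\Phi_t$, the unit-mean assumption, and the fact that $\Phi_{t+1}$ is stationary (as established in Theorem \ref{pp_strongly_mixing}) give, via the second factorial moment measure of the PPP,
\begin{equation*}
\mathbb{E}\!\left[\sum_{j\neq k} l(|x_j|)\, l(|x_k + s_k|)\, F_j^0(t)\, F_k^0(t+1)\right] = \Lambda^2 \left(\int l(|x|)\, dx\right)^{\!2} = \mathbb{E}[I_0(t)]\,\mathbb{E}[I_0(t+1)].
\end{equation*}
For the diagonal sum, Campbell's theorem applied to $\Phi_t$, together with the i.i.d.\ displacements having law $p(x,dy)$ at point $x$ and unit-mean fading, yields
\begin{equation*}
\mathbb{E}\!\left[\sum_{j} l(|x_j|)\, l(|x_j + s_j|)\, F_j^0(t)\, F_j^0(t+1)\right] = \Lambda \int l(|x|) \left(\int l(|y|)\, p(x,dy)\right) dx.
\end{equation*}
Subtracting the mean product leaves the off-diagonal contribution only, so $\mathrm{Cov}(I_0(t), I_0(t+1))$ equals $\Lambda$ times the numerator of the stated formula. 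Taking the ratio cancels $\Lambda$ and produces \eqref{corr_coeff}.

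The routine portion is Campbell's formula for the diagonal term; the only place that requires care is the off-diagonal term, where one must use the factorization of the second factorial moment measure of the PPP together with stationarity of $\Phi_{t+1}$ and independence of $\{s_j\}$ from $\Phi_t$ to collapse it to $\mathbb{E}[I_0(t)]\,\mathbb{E}[I_0(t+1)]$. An alternative, equivalent route would be to differentiate the joint Laplace transform $\mathcal{L}_{t,t+1}(\xi_1,\xi_2)$ (obtainable from the joint Laplace functional of $(\Phi_t,\Phi_{t+1})$ derived in the proof of Theorem \ref{pp_strongly_mixing}, after incorporating the fadings) at $\xi_1=\xi_2=0$; this produces the same expression but is heavier bookkeeping, so I would prefer the direct moment computation.
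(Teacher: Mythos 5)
Your proof is correct, but it reaches the cross-moment by a genuinely different route than the paper. The paper's (very terse) proof obtains $\mathbb{E}[I_0(t)I_0(t+1)]$ by differentiating the joint Laplace transform $\mathcal{L}_{t,t+1}(s_1,s_2)$ at the origin --- reusing the joint Laplace functional machinery already developed for Theorem \ref{pp_strongly_mixing} and the Appendix --- and uses Campbell's formula only for the marginal first and second moments. You instead compute $\mathbb{E}[I_0(t)I_0(t+1)]$ directly by splitting the double sum into its diagonal and off-diagonal parts and invoking Campbell's theorem together with the factorization of the second factorial moment measure of the PPP. Your route is more elementary and has the virtue of making explicit exactly where each hypothesis enters: the independence of displacements from positions and across points kills the off-diagonal term (after using stationarity of $\Phi_{t+1}$ to identify it with $\mathbb{E}[I_0(t)]\,\mathbb{E}[I_0(t+1)]$), while the diagonal term alone survives as the covariance. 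The paper's route is more economical given that $\mathcal{L}_{t,t+1}$ is already in hand, but hides these dependencies inside the functional. One small point applicable to both arguments: the stated formula, with no fading cross-moment in the numerator, implicitly requires $\mathbb{E}[F_j^0(t)F_j^0(t+1)]=\mathbb{E}[h]^2=1$, i.e., that the fading is uncorrelated (not merely unit-mean) across the unit time lag; you should state this explicitly where you evaluate the diagonal term, just as the Appendix does by taking $h_1$ and $h_2$ independent.
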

\begin{proof}
We use Campbell's formula to calculate the first and second moments of interference, and then use the fact that $\mathbb{E}[I_0(t)I_0(t+1)] = \frac{ \partial^2 \mathcal{L}_{t, t+1}(s_1, s_2)}{\partial s_1 \partial s_2}\big\rvert_{s_1=0, s_2=0}$.
\end{proof}
	We can see that the correlation is highest when the probability kernel that defines mobility is such that $p(x, B)=\delta_x(B)$, and is lowest in the limit of infinite mobility, \textcolor{black}{and decreases with increasing velocity}.
	
\section{The Interacting Queues Setting}
\label{sec:interacting}
As we mentioned before, our goal is to study the effect of mobility on wireless networks using queueing metrics as performance measures. The model we have discussed thus far - that of a single queue whose evolution is coupled with the interference it experiences from a network of interferers with finite mobility - is a novel one, and is the necessary first step in achieving our goal. The holy grail, however, is a more general model where interferers themselves are also queues. Such a model immediately possesses time-correlations that are far more complex than the ones we deal with in previous sections. Nonetheless, we believe that the basic insights we present in this work will apply to that model as well\textcolor{black}{, and our intuition is supported by simulation results that we describe later in this section}.\\Consider the same model as presented in Section \ref{sec:model}, except that every interferer now has a queue of data to transmit. For simplicity, let us assume that every interferer is the transmitter in a moving transmitter-receiver pair, with a queue of data to transmit to its dedicated receiver. We'll assume that the receiver at the origin and its associated transmitter are not moving (This is a simplifying assumption, but as we have mentioned in previous sections, it can be lifted with a little work). Assume that data arrives at the same average rate to all queues \textcolor{black}{(we will call this a homogeneous system, as opposed to a heterogeneous system where average arrival rates are different across queues)}. The SINR at the origin is now 
	\begin{align}
\label{sinr_origin_interacting}
\text{SINR}_0(t) &= \frac{S_0(t)}{I_0(t) + \sigma^2(t)}
= \frac{l(R)F_0^0(t)}{\sum_{x_j \in \Phi_t}\mathcal{X}_j(t) l(|x_j|)F_j^0(t) + \gamma}.
\end{align}
This is similar to (\ref{sinr_origin}), but with the extra set of time-evolving processes $\{\mathcal{X}_j(t)\}$. $\mathcal{X}_j(t)$ is 1 if the queue associated with interferer $j$ is non-empty at time $t$, and is 0 otherwise. Similar expressions can be written for the SINRs seen at the receivers associated with interferers. Clearly, the evolution of all the queues are coupled with each other - first because the interference each queue sees is a function of the others' locations, but additionally because the interference each queue sees is a function of whether all the other queues are empty or not. Again, let us consider the two questions we analyzed previously. \\It is easy to see that in the case without motion, i.e., when $\Phi_t = \Phi_0 \forall t$, we cannot provide stability guarantees for any arrival rate for similar reasons as those we state in Section \ref{sec:mobility_guarantees_stability}. Given enough time \textcolor{black}{or starting with sufficiently large initial conditions}, closely clustered queues will start to build up in length and interfere excessively with each other. \textcolor{black}{An instability result similar to Proposition \ref{prop:instability} is hence straightforward to state. }Any non-zero mobility will again intuitively alleviate these bad configurations, but a formal proof of strong mixing similar to Corollary \ref{interference_mixing} is elusive, since now the variables $\{\mathcal{X}_j(t)\}$ for a particular value of $t$ are correlated by the processes $\Phi_s, F_1^0(s), F_2^0(s), ...$ $\forall s \leq t$. \textcolor{black}{Nonetheless, thorough simulation studies of such a homogeneous system indicate that as long as there is non-zero mobility ($v>0$) of interferers, the necessary and sufficient condition for stability is again the same as what is described in Theorem \ref{queue_stability}. This is in line with existing results on homogenous interacting queue systems (e.g., \cite{bonald2004wireless}). Stability conditions for heterogeneous systems are significantly more difficult to analyze, and very little is known for any more than two interacting queues (see \cite{fayolle1979two} and \cite{cohen2000boundary} for the two-queue problem, and \cite{cohen1984functional} for a discussion on larger systems).} \\To discuss the question of how statistics of all queues will change with different degrees of mobility, we recall the intuition illustrated in Fig. \ref{slow_and_fast} and formalized in Section \ref{sec:correlation}. Increasing mobility in the network will decrease correlations across time instants. This, we believe, will again have the effect of improving queue statistics as mobility increases - but we have not been able to prove this. \textcolor{black}{Simulation studies of the interacting queue system support this claim, since we see that the first moment of queue length/workload decreases with increasing velocity of interferers (Fig. \ref{fig:interacting_queue_workload}, description of simulations in Section \ref{sssec:interacting_sim}).} In the extreme limit of infinite mobility, queues decouple from correlations between each other and across time, and evolve independently. \\
\textcolor{black}{To conclude, we note that interacting queue systems have been studied for over half a century and are often not amenable to exact analysis, as mentioned earlier. The insights we obtain from our model allow us to make educated guesses about an interacting queue version of the system which are supported by simulations.}
	\section{Analytical and Simulation studies}
	\label{sec:simulation}
	\subsection{Average Workload Studies}
	\label{ss:avg_workload}
	The model is that of Section \ref{sec:model} and our simulation setup is as follows \textcolor{black}{(all simulations are carried out using Python)}. We consider a 100 $\times$ 100 square whose edges are wrapped around to avoid edge effects. In order to approximate continuous time, we let our overall system evolve in very small intervals of discrete time, of length $\Delta  = 10^{-3}$ seconds. The queue evolves in time slots of length $\delta $. We set $\delta  = \Delta $ in simulations, which can be interpreted as approximating a continuous time queue. The arrival process $A(n)$ to the queue is a Bernoulli process of rate $\lambda$, where at every time slot a packet arrives with probability $\lambda \delta $, with $\lambda$ chosen such that $\lambda \delta <1$. Interfering nodes are distributed as a PPP, and move according to the mobility models described before. We assume that the service process for the queue is the truncated Shannon rate process described in Section \ref{sec:model}. The reduction in workload that the queue sees at time $t$ in the interval $\Delta $ is hence $\log_2(1+\text{SINR}_0(t))\mathbbm{1}[\text{SINR}_0(t)>T] \Delta $. The queue evolves according to (\ref{lindley_recursion}). From Theorem \ref{queue_stability}, to guarantee stability of the queue, we must ensure that $\mathbb{E}[A(n)] = \lambda < \mathbb{E}[s(0)]$ (since $\Delta  = \delta $). We empirically estimate $\mathbb{E}[s(0)]$ for our system and set $\lambda$ accordingly. For our simulations, we set $\Lambda = 0.1$, noise power $\sigma = 0$,  $R=0.3$, $T=8$ and $l(r) = (1+r)^{-4}$. We consider Rayleigh fades with mean 1.\\To begin, we verify the ordering of mean queue workload that is indicated by Corollary \ref{mean_workload_ordering}. We estimate the mean workload, and 95\% confidence intervals using the batch mean method detailed in Chapter IV.5 of \cite{asmussen2007stochastic}.
		\begin{figure}[t]
	\begin{center}
		{\includegraphics[width=0.7\textwidth]{./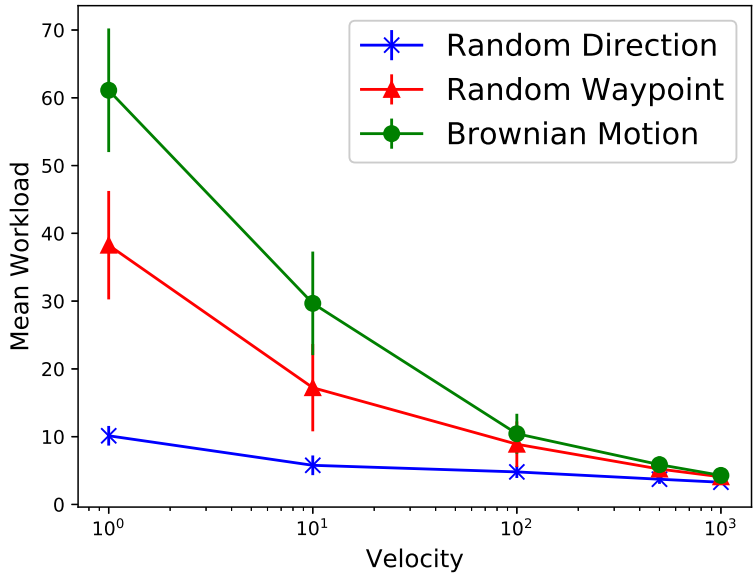}}
		\caption{Comparison of the effects of different mobility models. $\lambda = 1.2$, $\mathbb{E}[s(t)] \approxeq 1.37$, $\Lambda = 0.1$.} 
		\label{fig:mobility_comparison_fig}
	\end{center}
\end{figure}
	Fig. \ref{fig:mobility_comparison_fig} shows the anticipated trend in mean workload, $\mathbb{E}[W]$, for all mobility models. Mean delay, $\mathbb{E}[D] = \mathbb{E}[W]/\lambda$, will show the same trend. We see an order of magnitude difference between the mean workload for $v=1$ and the mean workload for $v=1000$. Next, we compare the effects of different mobility models on mean queue workload. This comparison can be thought of as a measure of how fast each mobility model causes the PPP of interferers to mix - the faster the network mixes, the lower the mean workload will be. Fig. \ref{fig:mobility_comparison_fig} also compares the mobility models presented in Section \ref{sec:model}. We see that workloads associated with Brownian motion (BM) dominate those associated with the random waypoint model (RWP), which in turn dominate those associated with the random direction (RD) model. \textcolor{black}{Here is an intuitive explanation for why we see this ordering among mobility models. Mixing is essentially the decorrelation of the point process from its initial configuration ($\Phi_0$). A point process with RD mobility decorrelates faster than one with RWP mobility. This is because each point undergoing RD motion moves in a fixed direction for all time, away from its initial position. Points undergoing RWP motion change direction with time, and will often end up moving towards their initial positions, \textit{decreasing} displacement from their initial positions and slowing the rate of decorrelation. Brownian motion can be thought of as RWP motion with instantaneous changes in direction, resulting in even slower decorrelation.} Note that the RD model is used in the literature to model vehicular networks (see \cite{choi2018poisson}). In contrast, the BM and RWP models have been used to model more erratic movement, such as that of pedestrians. Our simulation studies suggest that the former motion leads to better performance of wireless queues, due to its more directed nature.
	\subsubsection{Interacting Queue Simulations}
	\label{sssec:interacting_sim}
	\textcolor{black}{We simulate an interacting queue system (as described in Section \ref{sec:interacting}) where all interferers are themselves queues. We choose $\Delta = 10^{-2}$, $\delta = 1$, $\Lambda = 0.01$, noise power $\sigma = 0.1$,  $R=0.3$, $T=8$ and $l(r) = (1+r)^{-4}$. We consider Rayleigh fades with mean 1. The arrival processes for all queues are chosen as i.i.d. Bernoulli processes of rate $\lambda$, where at every time slot of length $\delta=1$ a packet arrives to a queue with probability $\lambda = 0.08$. $\mathbb{E}[s(0)]$ can be computed and has the value $0.0907$. Mean queue length is plotted vs. velocity in Fig. \ref{fig:interacting_queue_workload} with 95\% confidence intervals calculated using the batch mean method.}
	\begin{figure}[t]
		\begin{center}
			\includegraphics[width=0.6\textwidth]{./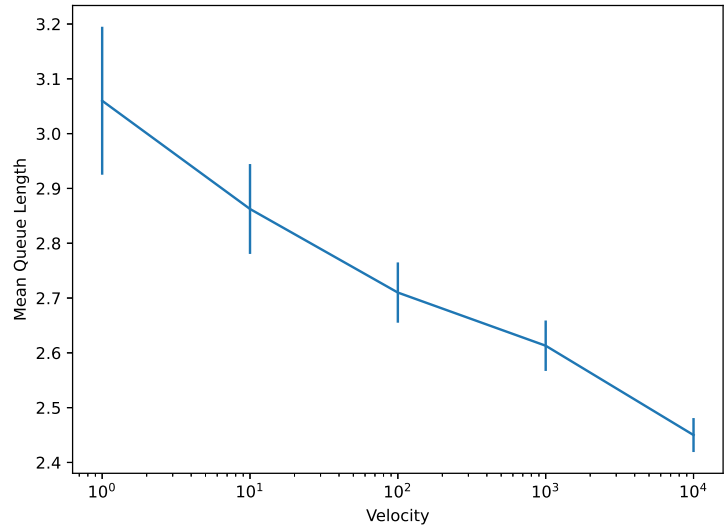}
			\caption{Mean queue length vs. interferer velocity for an interacting queue system.}
			\label{fig:interacting_queue_workload}
		\end{center}
	\end{figure}
		\subsection{Delay analysis}
		\label{ss:delay_analysis}
	\begin{figure}[th]
		\begin{center}
			\begin{subfigure}[h]{0.49\columnwidth}
				\includegraphics[width=\linewidth]{./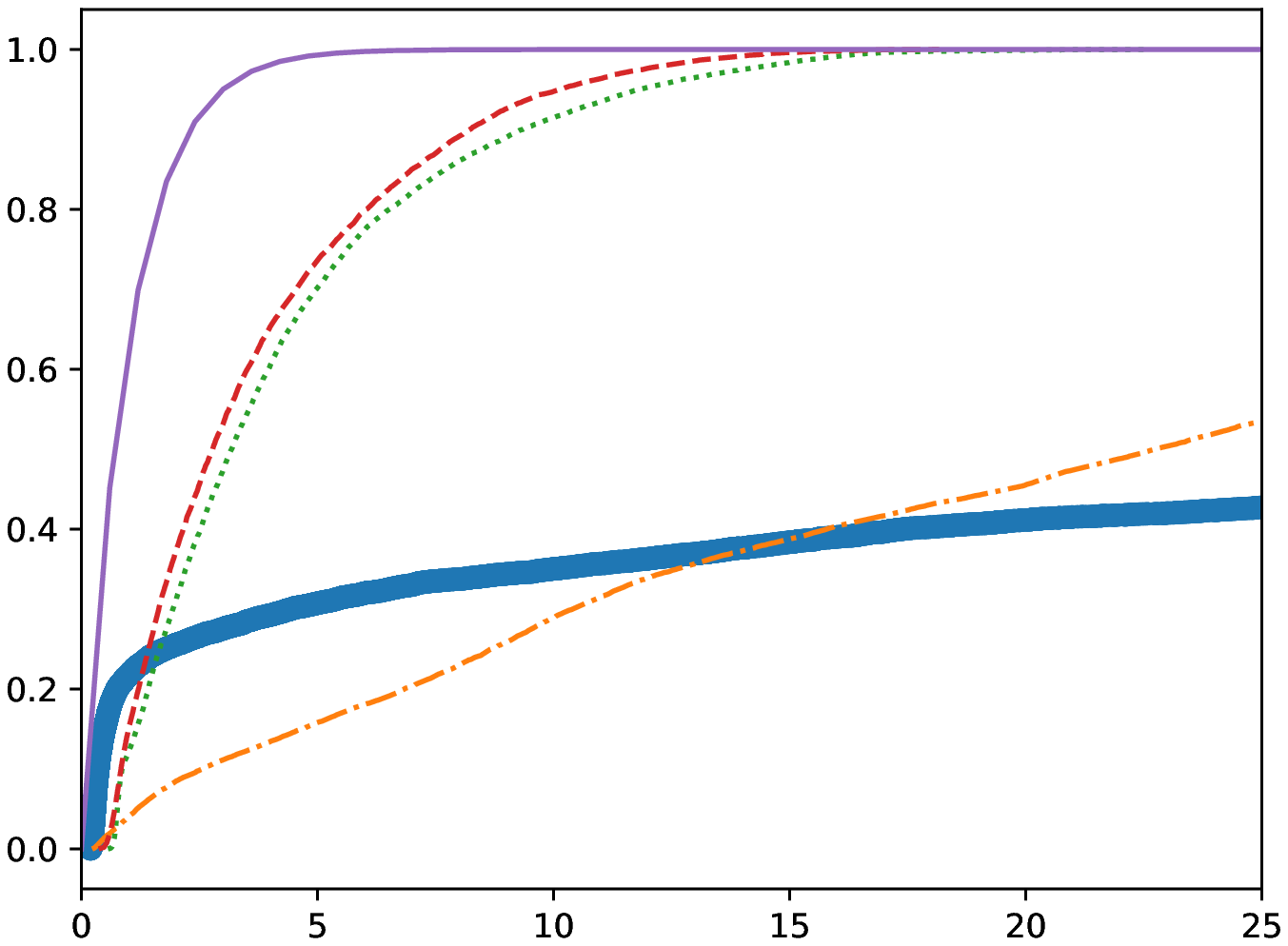}
				\caption{X-axis = [0, 25]}
				\label{fig:near}
			\end{subfigure}
			\begin{subfigure}[h]{0.49\columnwidth}
				\includegraphics[width=\linewidth]{./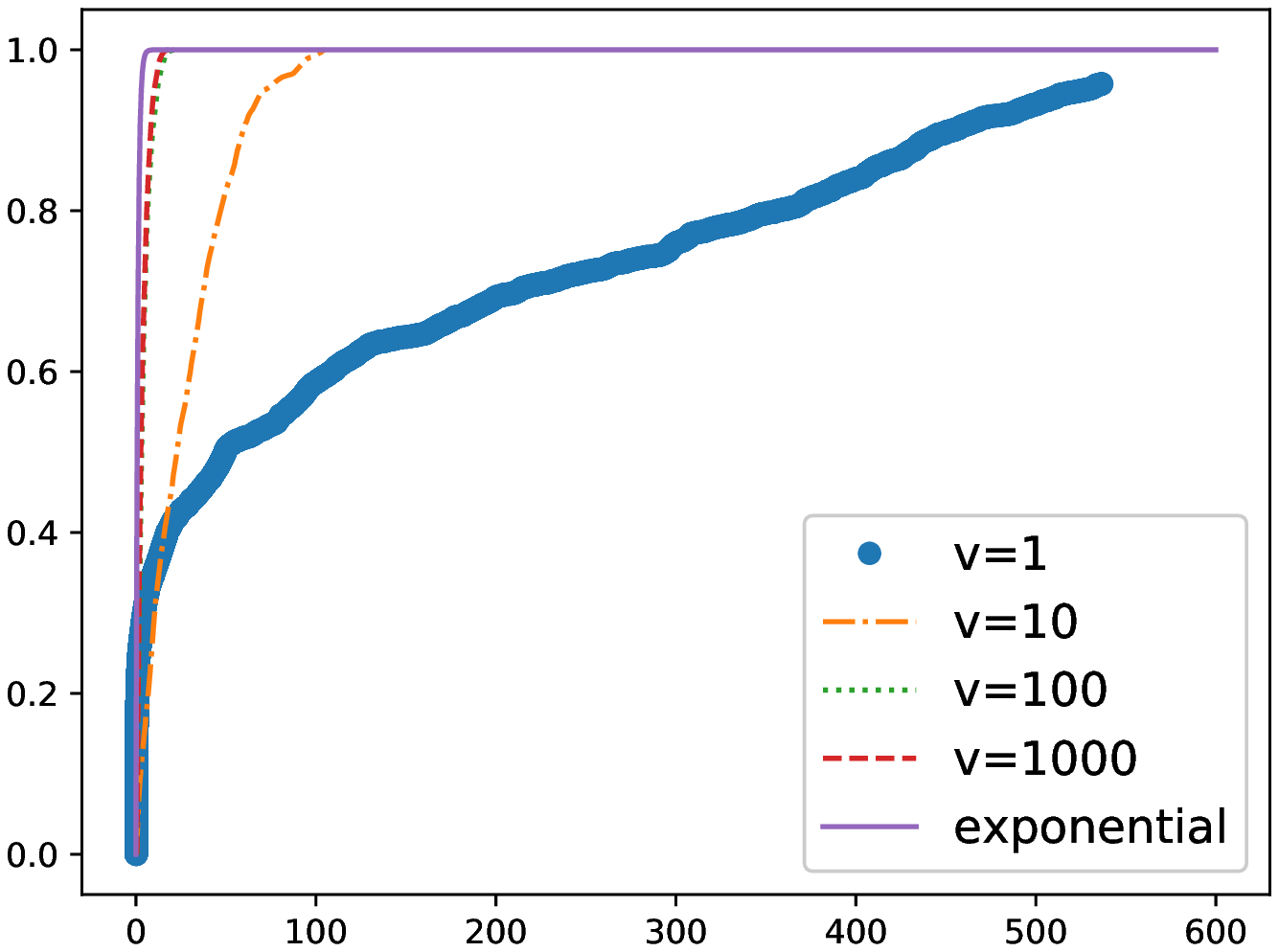}
				\caption{X-axis = [0, 600]}
				\label{fig:far}
			\end{subfigure}
		\end{center}
		\caption{Empirical CDF of latency, observed at two different scales. Legends are the same for both plots.}\label{near_and_far}
	\end{figure}
	We also evaluate and plot the empirical CDF of latency/delay (as defined before Corollary \ref{mean_delay_ordering}) for the random direction model and the values of parameters used in the previous subsection. We observe via comparison of the CDFs that the tail distributions of latency become heavier with decreasing velocities. This observation has engineering implications for communication networks - in the context of our system, for example, this indicates that packets will more often see large latencies with decreasing mobility levels. A system design implication is that higher velocities will be advantageous in scenarios where reliable service with low probability of rare events is required, such as in URLLC systems.
	\subsection{Heavy-traffic approximations for mean queue workload}
	\label{ss:heavy_traffic}
	We would like a computational method that can estimate the mean workload of queues experiencing different interferer mobility levels. To obtain such expressions, we use heavy-traffic approximation results - our analysis is hence a good approximation only when the queue experiences heavy traffic. This approach was first established by Kingman \cite{kingman1962queues} and later generalized by Whitt and others (see \cite{whitt2000overview}, \cite{whitt1974heavy}). These results provide approximations for quantities of the queue as the load factor $\rho$ of the queue tends to $1$, i.e., when the queue experiences heavy traffic. The results use functional central limit theorems to establish stochastic-process limits of heavy-traffic queues, and then use properties of the limits established to obtain approximations for the queueing dynamics. For more details, see \cite{whitt2002stochastic} and references therein.\\Using the results established in \cite{whitt2002stochastic} (Chapter 9.6), we have that
	\begin{align}
	\mathbb{E}[W_\infty] \approx \frac{\mathbb{E}[A(1)]\rho(c_A^2 + c_S^2)}{2(1-\rho)},
	\end{align}
	where 
	\begin{gather}
	\rho = \mathbb{E}[A(1)]/\mathbb{E}[V(1)]\nonumber \\
	c_A^2 = Var[A(1)]/\mathbb{E}[A(1)]^2, \text{ } c_S^2 = \lim_{k \rightarrow \infty} \frac{1}{k} \sum_{j=1}^k (k-j) \frac{Cov(V(1), V(j))}{\mathbb{E}[V(1)]^2}\label{eq:c_s_2}.
	\end{gather}
	The effect of interferer mobility on the evolution of the queue is captured in $c_S^2$ - as the mobility of interferers increases, the correlations between service amounts in different time slots will decrease, and $c_S^2$ will decrease. In the limit of infinite mobility, $c_S^2$ reduces to $Var[V(1)]/\mathbb{E}[V(1)]^2$.\\
	For our computations, we assume the queue evolves at time slots of unit length, i.e., $\delta = 1$. The arrival process is a Bernoulli process of rate $p_{arr}$, where at every time slot a packet arrives with probability $p_{arr}$. The service process is $s(t) = \mathbbm{1}[\text{SINR}_0(t)>T]$, and we set $\Lambda = 0.1$, noise power $\sigma = 0$,  $R=0.3$, $T=8$ and $l(r) = (1+r)^{-4}$. We consider Rayleigh fades with mean 1 and a random direction mobility model. The parameter $p_{arr}$ is determined by the value of $\rho$, which we choose to be $0.97$. Under these assumptions and using the Fubini-Tonelli theorem, we can show that 
	\begin{align}
	\mathbb{E}[A(1)] &= p_{arr}, Var[A(1)] = p_{arr}(1-p_{arr})\\
	\mathbb{E}[V(1)] &= \mathbb{P}[\text{SINR}_0(t)>T] = \mathbb{P}(L_t) = \exp\left(-2\pi\Lambda \int_0^\infty \frac{u}{1 + l(R)/(Tl(u))}du\right)\\	
	Cov(V(1), V(j)) &= \int_0^1 \int_0^1 \mathbb{P}[\text{SINR}_0(t)>T, \text{SINR}_0(s + j - 1)>T] ds dt - \mathbb{P}(L_t)^2\label{eq:cov},
	\end{align}
	where the last quantity can be computed using methods similar to those presented in the Appendix. For the chosen parameters, Fig. \ref{fig:heavy_traffic_fig} compares mean workloads obtained via simulation (with 95\% confidence intervals) and via this heavy-traffic approximation. We see that for values of $\rho$ close to $1$, the approximation is accurate and provides a way to estimate the quantitative effect of velocity on mean workload.
		\begin{figure}[t]
		\begin{center}
			{\includegraphics[width=0.6\textwidth]{./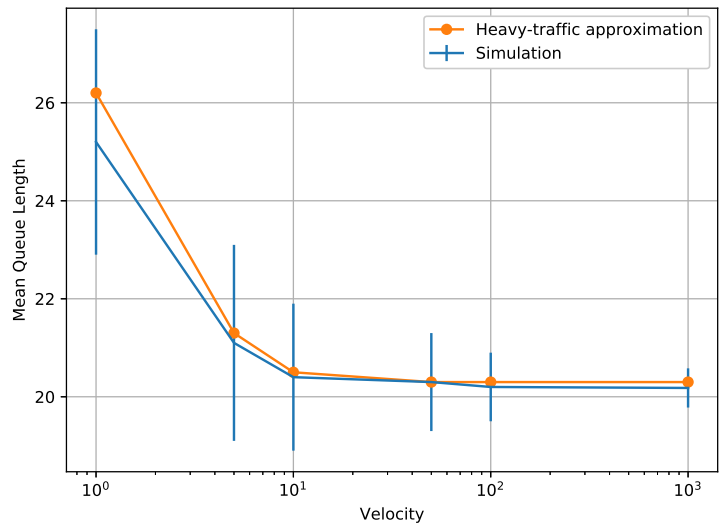}}
			\caption{Heavy-traffic approximation. $\rho = 0.97$.} 
			\label{fig:heavy_traffic_fig}
		\end{center}
	\end{figure}
	\subsection{Comparison of Mobility Models}
	Finally, we provide some insight into the ordering effect amongst mobility models that we observe in Fig. \ref{fig:mobility_comparison_fig}, drawing on results and intuition from Sections \ref{sec:correlation} and \ref{ss:heavy_traffic}. First note that the joint probabilities of SINR level-crossing events that we discussed in Section \ref{sec:correlation} appear in the heavy-traffic approximations that we presented in \ref{ss:heavy_traffic} via Eq. (\ref{eq:c_s_2}) and (\ref{eq:cov}). It is intuitive that these joint probabilities/correlation structures will also drive the ordering of mean workloads of queues that are served by any general rate function and that are in non-heavy traffic regimes - as we mention in Section \ref{sec:correlation}, decreased correlations across time slots should intuitively lead to smaller mean queue workloads. It seems natural, then, to compare these correlation structures across mobility models. We compute the quantity $\mathbb{P}[\text{SINR}_0(t)>T, \text{SINR}_0(t+1)>T] =  \mathbb{P}(L_t, L_{t+1})$, which we derive closed-form expressions for in the Appendix. We assume the same system parameters as in the previous subsections, i.e., $\Lambda = 0.1$, noise power $\sigma = 0$,  $R=0.3$, $T=8$ and $l(r) = (1+r)^{-4}$, and plot $\mathbb{P}(L_t, L_{t+1})$ as a function of velocity in Fig. \ref{fig:cov_comparison}.
			\begin{figure}[t]
		\begin{center}
			{\includegraphics[width=0.6\textwidth]{./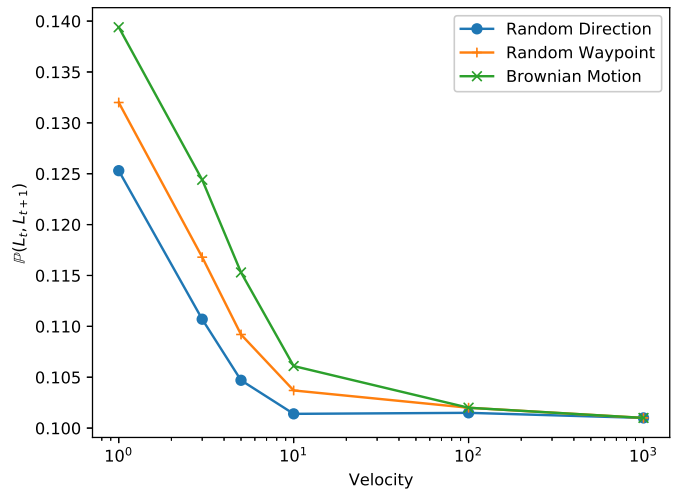}}
			\caption{The ordering of $\mathbb{P}(L_t, L_{t+1})$ across mobility models.} 
			\label{fig:cov_comparison}
		\end{center}
	\end{figure}
We see that the joint level-crossing probabilities are ordered across mobility models in the same way as the workloads in Fig. \ref{fig:mobility_comparison_fig}, confirming the intuition we presented in Section \ref{ss:avg_workload}.

	\section{Conclusion}
	We have shown that introducing mobility in a wireless network introduces mixing diversity that breaks non-ergodicity and allows us to provide universal stability guarantees for a queue driven by the resulting interference process. We next investigated the continuum of mobility scenarios that arise from considering all velocities in the interval $(0, \infty)$. Accelerating the motion of interferers accelerates the interference process, and the resulting averaging effect causes less variable and more reliable service that improves mean queue workload and mean delay. This effect can also be explained by considering the structure of correlations of the SINR and interference processes, which we have formalized. We also highlighted how our results and analysis \textcolor{black}{can be used to predict the behaviour of the more general and complex interacting queues setting}. Simulation and analytic studies (based on heavy-traffic approximations) provided further understanding and additional system design insights.
	\bibliographystyle{IEEEtran}  
	\nocite{*}
	\bibliography{references}
	 \appendix
	 \section{Proof of Theorem \ref{positive_correlation_crossing}}
	 We consider level-crossings spaced unit time apart for ease of exposition, but the result will hold for any interval of time. That $\mathbb{P}(L_{t+1}) = \mathbb{P}(L_t)$ is clear from the stationarity of $\text{SINR}_0(t)$. Let $h_1$ and $h_2$ be independent fading variables. The results in this section hold for any motion that preserves the homogeneity of $\Phi_t$. 
	 \vspace{-0.2cm}
	 \begin{align*}
	 &\mathbb{P}[L_{t+1}, L_t] = \mathbb{P}[l(R) h_1 > T I(t), l(R) h_2 > T I(t+1)]\\
	 &= \mathbb{E}[\exp(-\mu T I(t)/l(R)) \exp(-\mu T I(t+1)/l(R))]=\mathcal{L}_{t, t+1}(s, s),
	 \end{align*}
	 where $s = \frac{\mu T}{l(R)}.$
	 We assume that the motion that transforms $\Phi_t$ to $\Phi_{t+1}$ can be represented by a probability kernel $p$, where $p(x, B)$ is the probability that a point at location $x$ at time $t$ moves to region $B$ at time $t+1$. Denote by $\mathcal{L}_h$ the Laplace transform of the general fading variables $h$ that govern the channel between inteferers and the receiver of interest.
	 \begin{flalign*}
	 \mathcal{L}_{t, t+1}(s_1, s_2) &=\mathbb{E}\left[\prod_{X_i \in \Phi_t} \mathcal{L}_h (s_1 l(X_i))\prod_{Y_j \in \Phi_{t+1}} \mathcal{L}_h (s_2 l(Y_j))\right] \nonumber&\\
	 &=\mathbb{E}\left[\prod_{X_i \in \Phi_t} \mathcal{L}_h (s_1 l(X_i)) \int \mathcal{L}_h (s_2 l(y_i))p(X_i, dy_i)\right]\nonumber \\ 
	 &= \exp\left(-\Lambda \int (1 - v(x)) dx)\right), \nonumber
	 \end{flalign*}
	 \begin{align*}
	 \text{where }v(x) &= \mathcal{L}_h (s_1 l(x)) \int \mathcal{L}_h (s_2 l(y))p(x, dy) \\& = \mathcal{L}_h (s_1 l(x)) + \mathcal{L}_h (s_1 l(x)) \left[ \int \left( \mathcal{L}_h (s_2 l(y)) - 1 \right)p(x, dy)\right].
	 \end{align*}
	 Setting $s_1 = s_2 = s = \frac{\mu T}{l(R)}$, we have that $\mathcal{L}_{t, t+1}(s, s) = \mathbb{P}(L_t, L_{t+1})$. \\Noting also that $\mathbb{P}(L_t) = \exp\left(-\Lambda \int   1 - \mathcal{L}_h (s l(x))  dx\right)$, we obtain:
	 \begin{align*}
	 &\mathbb{P}(L_{t+1}|L_t) = \frac{\mathbb{P}(L_t, L_{t+1})}{\mathbb{P}(L_t)} = \frac{\mathcal{L}_{t, t+1}(s, s)}{\mathbb{P}(L_t)}  \\
	 &= \exp \left(\Lambda \int \int \mathcal{L}_h (sl(x)) \mathcal{L}_h(sl(y)) p(x, dy) dx \right) \times \exp \left(-\Lambda \int \mathcal{L}_h(sl(x))dx  \right).\\
	 &\implies \frac{\mathbb{P}(L_{t+1}|L_t)}{\mathbb{P}(L_t)} = \exp \Bigg(\Lambda \int \Big[ 1 - 2\mathcal{L}_h(sl(x))  + \int \mathcal{L}_h(sl(x)) \mathcal{L}_h (sl(y)) p(x, dy)   \Big] dx \Bigg).
	 \end{align*}
	 Since the system is stationary, $\mathbb{P}(L_{t+1})$ = $\mathbb{P}(L_t)$, which yields 
	 \begin{align*}
	 \int \left( \int \mathcal{L}_h (sl(y)) p(x, dy) \right) dx = \int \mathcal{L}_h(sl(x))  dx.
	 \end{align*}
	 {Plugging this last equation back in,}
	 \begin{align*}
	 & \frac{\mathbb{P}(L_{t+1}|L_t)}{\mathbb{P}(L_t)} = \exp \Bigg( \Lambda \int \left[1 - \mathcal{L}_h(sl(x))\right] \times \Bigg[1 - \Big(\int \mathcal{L}_h(sl(y)) p(x, dy)\Big)\Bigg] dx \Bigg) > 1.\qquad\square
	 \end{align*}
\end{document}